\newcommand{\R}{\mathbb{R}}
\newcommand{\N}{\mathbb{N}}
\newtheorem{theorem}{Theorem}[subsection]
\newtheorem{lemma}[theorem]{Lemma}  % Use the same counter as Theorem
\newtheorem{proposition}[theorem]{Proposition}
\newtheorem{claim}{Claim}
\newtheorem{remark}[theorem]{Remark}
\newtheorem{problem}[theorem]{Problem}
\newtheorem{definition}[theorem]{Definition}
\title{An Efficient Transport-Based Dissimilarity Measure for Time Series Classification under Warping Distortions}
\author[1]{Akram Aldroubi}
\author[2]{Rocío Díaz Martín}
\author[3]{Ivan Medri}
\author[3]{Kristofor E. Pas}
\author[3]{Gustavo K. Rohde}
\author[4]{Abu Hasnat Mohammad Rubaiyat}
\affil[1]{Department of Mathematics, Vanderbilt University, Nashville, TN, USA\\
Email: aldroubi@vanderbilt.edu}
\affil[2]{Department of Mathematics, Tufts University, Medford, MA, USA\\
Email: rocio.diaz\_martin@tufts.edu}
\affil[3]{Department of Biomedical Engineering, University of Virginia, Charlottesville, VA, USA\\
Emails: \{pzr7pr, xfd3sy, gustavo\}@virginia.edu}
\affil[4]{Jacobs, Hanover, MD, USA\\
Email: ar3fx@virginia.edu}
\begin{document}

\maketitle

\begin{abstract} 
    Time Series Classification (TSC) is an important problem with numerous applications in science and technology. Dissimilarity-based approaches, such as Dynamic Time Warping (DTW), are classical methods for distinguishing time series when time deformations are confounding information. In this paper, starting from a deformation-based model for signal classes we define a problem statement for time series classification problem. We show that, under theoretically ideal conditions, a continuous version of classic 1NN-DTW method can solve the stated problem, even when only one training sample is available. In addition, we propose an alternative dissimilarity measure based on Optimal Transport and show that it can also solve the aforementioned problem statement at a significantly reduced computational cost. Finally, we demonstrate the application of the newly proposed approach in simulated and real time series classification data, showing the efficacy of the method.
\end{abstract}

\section{Introduction}
Developing accurate and interpretable time series predictive models is a foundational goal in data science. These models have a variety of uses, such as anomaly detection \cite{rebbapragada2009finding}, human activity recognition \cite{lara2012survey}, communications \cite{fehske2005new}, structural health monitoring \cite{abdeljaber20181}, and  others \cite{BISVP08, berkaya2018survey, rubaiyat2024end, subasi2010eeg}. One popular problem is Time Series Classification (TSC), where given a signal, $s(t)$, the goal is to distinguish its class. Numerous TSC methods have been developed (see \cite{bagnall2017great,middlehurst2024bake} for recent reviews). They are drawn upon three foundational techniques: (1) feature-based, (2) deep learning-based,  and (3) distance-based methods. 

%%%%%%%%%%%%%%%%%% (1)
Feature-based techniques extract information from signals as features. Example features range in complexity, from fundamental statistics to polynomial approximations. However, discerning meaningful features without prior knowledge is a difficult problem. To overcome this, algorithms have been developed to capture an array of feature candidates. Two examples of this include the \textit{hctsa} toolbox \cite{fulcher2017hctsa} and TSFresh\cite{christ2018time}. While different in application (i.e., exploratory analysis vs. preprocessing for regression models), these provide a valuable baseline for exploring potentially discriminative information between classes. Building upon this, methods have been proposed to further improve these techniques through computing notable canonical variables \cite{lubba2019catch22canonicaltimeseriescharacteristics} and using {random} forest algorithms to improve adaptability \cite{middlehurst2022freshprince}. Feature-based methods for classifying signals might provide insights (\textit{a posteriori}) into how classes may deviate from one another. However, deriving mathematical models in feature methods have limitations in their physical interpretability, since any linear or nonlinear combination of features with different units do not have a straighforward interpretation. Therefore, any feature space distances derived from an ad hoc features is also difficult to interpret. Moreover,  within the context of feature selection, potential bias through user-input is a major factor in model quality.

%%%%%%%%%%%%%%%%%% (2)

Deep learning methods have been proposed as an end-to-end approach for TSC with increasing popularity. These models work through minimizing user input in favor of parameter tuning to maximize accuracies on a training dataset. Because of this,  deep learning often lacks interpretability. Nevertheless, if the model accurately captures the decision boundary between classes, its application post-training is exceptionally straight forward. At its core, many deep learning methods are based on Convolutional Neural Networks (CNNs) with primary modifications being derived on model architecture or filters within the model design. Notable examples include ResNet methods for classifying time series \cite{wang2016timeseriesclassificationscratch,cheng2021time,wu2019resnet} which uses sequential convolutional layers to extract features prior to the network, with varying components involved in further processing. Other more novel approaches include InceptionTime \cite{ismail2020inceptiontime}, which uses inception modules for feature extraction, and has a further derivative of H-InceptionTime \cite{10020496} using one dimensional filters for further feature extraction.

%%%%%%%%%%%%%%%%%% (3)
Finally, distance-based techniques work with raw time series and define similarity measures 
that are often used with conventional machine learning. These methods are attractive for modeling, due to their simple mathematical structure. For time series, elastic distances are most widely used for their ability to overcome confounding deformations, with dynamic time warping (DTW) being a prominent example \cite{sakoe1978dynamic}. Elastic measures implicitly assume that signals from each class are instances of certain (usually unknown) templates that suffer time warps or deformations. This formulation is intuitive for physical systems, i.e., deformations can include delays, frequency modulation or changes in speed or acceleration, for which we may require  translation and/or dilation invariant models or metrics \cite{benamou2000computational}. Computational methods for calculating DTW and other elastic distances have been extensively studied \cite{buchin2007computability, buchin2009exact,mueen2018speeding,li2018similarity}. Among current approaches to time series classification, simple methods such as nearest neighbor classifiers using DTW-based divergences are still current and among the most accurate \cite{bagnall2017great,middlehurst2024bake}. As a result, DTW remains an appealing standard for TSC with one of its main disadvantages being the lack o closed-form expression, relying on approximations via dynamic programming algorithms, which come at a high computational cost.\\

In this article, we examine in more depth the mathematical structure of distance-based classifiers under deformation-based model for each class. Although the deformation model is implicit in the underlying generative model, the precise formulation of this is not generally established in the literature. We show that DTW-related techniques give rise to an equivalence relation in a certain space of functions (Proposition \ref{prop: equivalence relation}), allowing the user to utilize any class sample as a template. For this,  we use a continuous version of DTW, referred to by us as CDTW, to avoid discretization artifacts (Section \ref{sec: CDTW}, Theorem \ref{thm: cdtw class}). We explain the connection between continuous and discrete formulations in Section \ref{sec: dtw - cdtw} (Theorem \ref{thm: CDTW-DTW param space}). The resulting equivalence classes will partition the space of time series and determine a precise mathematical class model (Section \ref{sec: model class}).
Finally, we propose an alternative and faster divergence $d_T$ following a similar class formulation (Section \ref{subsec: sdb}). 
We call $d_T$ a transport divergence since it can be computed using closed formulas related to the optimal transport (OT) theory. We show the relation between $d_T$ and CDTW (Theorem \ref{thm: cdtw implies sdb}). We show theoretical guarantees for classification using CDTW and $d_T$ (Theorem \ref{thm: CDTW guarantees}) and expand this with experiments in simulated and real world datasets (Section \ref{sec: experiments}). \\

\textbf{{Summary of contributions:}} 
\begin{itemize}
    \item We define a model for classes in 1D signal classification problems inspired in time warping techniques. We show that the classes arise from an equivalence relation. Section \ref{sec: model class}.
    \item We show that the equivalence classes are given as the 0-level sets of a continuous version of DTW. That is,  $s\sim \phi \iff CDTW(s,\phi) = 0$. Thus, making the classification problem well-posed, and exactly solvable. Theorems \ref{thm: cdtw class} and \ref{thm: CDTW guarantees}.
    \item We propose a new faster divergence that approximately follows the class model and for which accuracy guarantees can also be obtained. Definition \ref{def sdb} and Theorem \ref{thm: CDTW guarantees}.
\end{itemize}

\medskip

\textbf{{Organization of the paper:}} 
In Section \ref{sec: prelim} we review the necessary definitions of DTW and OT, which form the foundation of our proposed class models and classification algorithms.
Sections \ref{sec: model} to \ref{sec: experiments} have our new results and experiments as outlined 
above. Section \ref{sec: proofs} has the proofs of all our new results.

\medskip

\textbf{Keywords:} Signal Classification, Continuous Dynamic Time Warping, Optimal Transport, Numerosity Reduction.

\section{Preliminaries}\label{sec: prelim}
In this section, we explicitly state the DTW divergence. Also,  we review some basic facts from Transport Theory for 1D probability distributions (see \cite{Villani2003Topics, Villani2009Optimal, Santambrogio-OTAM, Thorpe2018Introduction}). 
   \subsection{The DTW divergence}\label{sec: dtw}

    \begin{definition}[DTW]\label{def: DTW} The  Dynamic Time Warping (DTW) divergence between
    two discrete functions $\overline{s}:[0,1,\dots,N] \to \R$, $\overline{\phi}:[0,1,\dots,M] \to \R$ is defined as
    \begin{equation}\label{eq: DTW problem}
        DTW(\overline{s},\overline{\phi}) = \min_{\overline{g}_1,\overline{g}_2}\sum_{i=0}^L  |\overline{s}(\overline{g}_1(i)) -\overline{\phi}(\overline{g}_2(i) )|         
    \end{equation}
    where the minimization is over all {discrete time warps (deformations)}  $\overline{g}_1, \overline{g}_2: [0,1,\dots,L] \to \mathbb{N}$, for some $L\in \N$ with $\min\{N,M\}\leq L \leq M+N $, such that
    \begin{itemize}
        \item $\overline{g}_1(0) = 0 = \overline{g}_2(0)$
        \item $\overline{g}_1(L) = N$ and         $\overline{g}_2(L) = M$
        \item $0 \leq \overline{g}_i(j+1) - \overline{g}_i(j) \leq 1$
    \end{itemize}    
    For any admissible $\overline{g}_1,\overline{g}_2$ we can define a finite sequence $\overline{\gamma}: [0,1,\dots,L]\to \N\times \N$ as $\overline \gamma = (\overline g_1,\overline g_2)$. We say that $\overline{\gamma}$ is a discrete warping path. The warping path that attains the infimum in \eqref{eq: DTW problem} is called an optimal discrete warping path between $s$ and $\phi$. 
    \end{definition}
    Note from the last condition that the time warps $\overline{g}_i$ must be non-decreasing.
    For simplicity, in this article we will only consider the case $N=M$.\\
    
   \textbf{Computational cost:} The time complexity of the DTW algorithm is of order $\mathcal{O}(N^2)$ for two input sequences of length $N$, and it can be optimized to $\mathcal{O}({N^{2}}/\log( \log (N)))$ \cite{gold2016dynamic}.

\subsection{One-dimensional Optimal Transport}

    Given probability measures $\mu$ and $\nu$ on the real line, the 2-Wasserstein distance is defined by  
    \begin{equation}\label{eq: wass1}
        W_2(\mu,\nu):=\inf_{\pi} \left(\int |x-y|^2 d\pi(x,y)\right)^\frac{1}{2}
    \end{equation}
    where the infimum is taken over all \textit{transport plans} $\pi$ which are probability measures on $\mathbb{R}\times \mathbb{R}$ with first and second marginals $\mu$ and $\nu$, respectively. This definition extends to $\mathbb{R}^n$, and such an infimum (under certain decay conditions) is a finite number and is always attained.  
    
    In the case of considering $\mu,\nu$ one-dimensional probability distributions, 
    let $F_{\mu}$ and $F_{\nu}$ denote their cumulative distribution functions (CDFs), and consider $F_{\mu}^{\dagger}$ and $F_{\nu}^{\dagger}$ as their generalized inverses (or ``quantile functions'') defined below:  
    
    \begin{definition}[Generalized inverse - Comments and properties in Section \ref{app: gen inv}]\label{def: gen inv}  
        Let $F:[a,b] \to [c,d]$ be a non-decreasing, right-continuous function.  
        We define the generalized inverse of $F$, denoted $F^\dagger$, as the function $F^\dagger:[c,d] \to [a,b]$ given by  
        \begin{equation*}
            F^\dagger(y) := \inf \{x\in [a,b]: F(x) > y \}, 
        \end{equation*}
        where we impose $\inf \emptyset = b$.  
    \end{definition}  
    
    The 2-Wasserstein distance \eqref{eq: wass1} can be expressed as  $W_2(\mu,\nu)^2=\int_0^1 |F_{\mu}^{\dagger}(x)-F_{\nu}^{\dagger}(x)|^2dx.$
    Moreover, if at least one of the two distributions does not give mass to atoms, say $\mu$, then the optimal transport plan is unique and supported on the graph of the function 
    \begin{equation}\label{eq: opt map}
        g=F_\nu^{\dagger}\circ F_\mu
    \end{equation} 
    called the \textit{optimal transport map} from $\mu$ to $\nu$. The condition on the marginals of the coupling $\pi$ then translates into saying that the function $g$ pushes the measure $\mu$ onto the measure $\nu$ by the formula 
    \begin{equation}\label{eq: push}
    \nu(A) = \mu\left(\{x\mid \, g(x)\in A\}\right) \quad \text{for every measurable set } A.    
    \end{equation}
    When the measures have densities, that is, when $d\mu = a(x)dx$, $d\nu = b(x)dx$, the push-forward relation \eqref{eq: push} can be recast as 
    \begin{equation}\label{eq: change var}
        a(x) = b(g(x)) g'(x).
    \end{equation}
    This indicates that the transport map $g$ also acts as a ``time warp'' that aligns the density functions $a$ and $b$. However, the matching assumes a normalization condition, that is, both functions $a,b$ should be non-negative and integrate one.
    In equation \eqref{eq: change var}, this condition is reflected by the multiplication with $g'(x)$, which arises from the change-of-variables formula and ensures that total probability mass is preserved. As such, $g$ provides a registration between $a$ and $b$, with the extra factor $g'$ accounting for the change in density induced by the transformation.\\
    
   \textbf{Computational cost:} Computing the transport map $g$ for one-dimensional signals involves evaluating two integrals to obtain the cumulative distribution functions (CDFs), followed by an interpolation step to compute the inverse of one CDF. Assuming the signals are discretized with $N$ points, and using cumulative sums for the CDFs and linear interpolation for the inverse, the total time complexity is $\mathcal{O}(N \log (N))$.

\section{Model Class and Classification Problem}\label{sec: model}
    In this section, we state a classification problem with a particular class model. We show that the problem is well-posed. Inspired by the DTW method and the theory of Optimal Transport, we introduce the CDTW and $d_T$ divergences, respectively, and relate classes with level sets of these divergences. This will allow us to find closed formulas to classify signals under this model in later sections.
    
\subsection{Proposed Model Class}\label{sec: model class}
    The 1D \textit{generative model} for class $c$ generated by \textit{template} functions $\{\phi_m^{(c)}\}_{m=1}^{M_c}$ is defined to be the set
        \begin{equation} \label{eq: full class model}
            \mathbb{S}^{(c)} = \bigcup_{m=1}^{M_c} \mathbb{S}_{\phi_m^{(c)}},
        \end{equation}
        where 
        \begin{equation}\label{eq: atomic class}
            \mathbb{S}_{\phi_m^{(c)}} = \left\{ s\mid \quad s\circ g_1 = \phi_m^{(c)}\circ g_2 \quad \text{ for some } g_1,g_2 \in \mathcal{G}\right\}
        \end{equation}
        for some predefined set of functions $\mathcal{G}$, which is common for all the classes. 
        We call the subclass $\mathbb{S}_{\phi_m^{(c)}}$ as the \textit{atomic class} associated with template $\phi_m^{(c)}$.\\

        \textbf{Standing assumptions.} In this work, certain assumptions are needed for all  functions $s,\phi,g_1,g_2$, etc. {These are detailed in the definition of what we refer to as hypothesis \textbf{H} in Section \ref{app: model class}, Definition \ref{def: hyp H}.}
        For simplicity, we will consider signals $s,\phi:[0,1]\to\R$, but the proposed analysis can easily be extended to signals defined over any interval  $[a,b]$. Some regularity condition are also needed. To improve readability,  they are defered to  Definition \ref{def: C1_FF} in Section \ref{app: model class}.
        The set $\mathcal{G}$ needs additional properties that are explained in Definition \ref{def: G} of Section \ref{app: model class}. It is motivated by the goal of providing a continuous counterpart to the functions $\overline{g}_1$ and $\overline{g}_2$ in the classical definition of the DTW method. Indeed, $\mathcal{G}$ is the set of non-decreasing functions $g:[0,1]\to \R$ that satisfy $g(0)=0$, $g(1)=1$, with additional regularity conditions. We add the qualifier of \textbf{informal statement} to any definition or result where we do not explicitly state the regularity conditions of hypothesis $\textbf{H}$. Nevertheless, full requirements are given in the proofs. 
        
    \begin{problem}[Classification problem] \label{def: classification problem}
        Let classes be defined by $\mathbb{S}^{(c)}$, $c=1,\dots,N$. Given a set of labeled training samples $\{s_1,s_2,\dots\}$, determine the class of an unclassified signal $s$.
    \end{problem}
    \begin{proposition}[Informal statement - Proof in Section \ref{app: Equivalence relation}]\label{prop: equivalence relation}
        Under an specific choice of the set of function $\mathcal{G}$, the atomic class $\mathbb{S}_\phi$ is exactly the equivalence class of the function $\phi$ given by the relation $s\sim \phi$ if and only if there exists $g_1$, $g_2\in \mathcal{G}$ such that $s\circ g_1 = \phi \circ g_2$.
    \end{proposition}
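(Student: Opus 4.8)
The plan is to prove that $\sim$ is an equivalence relation on the relevant space of signals and then identify the equivalence class $[\phi]$ with the atomic class $\mathbb{S}_\phi$ as defined in \eqref{eq: atomic class}. Since the definition of $\mathbb{S}_\phi$ literally reads ``$s \in \mathbb{S}_\phi$ iff there exist $g_1,g_2 \in \mathcal{G}$ with $s\circ g_1 = \phi\circ g_2$,'' the identification of $\mathbb{S}_\phi$ with $[\phi]$ is immediate once the relation is shown to be an equivalence relation. So the mathematical content reduces entirely to verifying reflexivity, symmetry, and transitivity, and this is exactly where the structural assumptions on $\mathcal{G}$ (Definition~\ref{def: G}) must be used.

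\textbf{Reflexivity and symmetry.} For reflexivity, I would exhibit a single $g \in \mathcal{G}$ and take $g_1 = g_2 = g$, giving $s \circ g = s \circ g$; the natural choice is the identity map $\mathrm{id}(t) = t$, so the key requirement on $\mathcal{G}$ is that it contain the identity (a non-decreasing map fixing $0$ and $1$). Symmetry is purely formal: if $s \circ g_1 = \phi \circ g_2$ then trivially $\phi \circ g_2 = s \circ g_1$, so $\phi \sim s$ with the roles of the warps swapped; no assumption on $\mathcal{G}$ beyond its being a set of admissible warp pairs is needed here.

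\textbf{Transitivity.} This is the step I expect to carry all the weight. Suppose $s \sim \phi$ and $\phi \sim \psi$, so there exist $g_1, g_2 \in \mathcal{G}$ with $s \circ g_1 = \phi \circ g_2$ and $h_1, h_2 \in \mathcal{G}$ with $\phi \circ h_1 = \psi \circ h_2$. The goal is to produce $k_1, k_2 \in \mathcal{G}$ with $s \circ k_1 = \psi \circ k_2$. The difficulty is that the two relations align $s$ and $\psi$ to $\phi$ through \emph{different} reparametrizations of $\phi$ (namely $g_2$ and $h_1$), so one cannot simply compose. The natural strategy is to find a common refinement: reparametrize both equations by a further warp so that the two images of $\phi$ are expressed over the same parameter domain. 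Concretely, I would seek warps $p, q \in \mathcal{G}$ with $g_2 \circ p = h_1 \circ q$ (so that the $\phi$-coordinate matches), then set $k_1 = g_1 \circ p$ and $k_2 = h_2 \circ q$, which yields $s \circ k_1 = \phi \circ g_2 \circ p = \phi \circ h_1 \circ q = \psi \circ k_2$. For this to close, I need two closure properties of $\mathcal{G}$: first, that $\mathcal{G}$ is \emph{closed under composition} (so $g_1 \circ p$, $h_2 \circ q$ lie in $\mathcal{G}$), and second, that any two elements $g_2, h_1 \in \mathcal{G}$ admit such a common reparametrization inside $\mathcal{G}$ — a kind of Ore/amalgamation condition.

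The main obstacle, therefore, is establishing the existence of the synchronizing warps $p,q$. I expect this to follow from the specific choice of $\mathcal{G}$ flagged in the proposition's hypothesis: if $\mathcal{G}$ consists of non-decreasing surjections of $[0,1]$ fixing the endpoints with suitable regularity (e.g.\ continuous, or absolutely continuous with the regularity in Definition~\ref{def: C1_FF}), one can build $p$ and $q$ explicitly by setting $g_2 \circ p = h_1 \circ q$ to be a common upper (or lower) envelope of the two warps, or by a ``graph-completion'' argument on the product $[0,1]^2$ that traces a monotone path consistent with both $g_2$ and $h_1$. The regularity conditions of hypothesis \textbf{H} will be what guarantees that this envelope can be chosen back inside $\mathcal{G}$ rather than merely in the larger class of monotone maps. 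Once the common refinement is produced and shown to respect the endpoint and monotonicity constraints, transitivity follows by the composition displayed above, and the proposition is complete.
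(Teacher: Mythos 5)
Your skeleton coincides with the paper's proof: reflexivity via the identity map, symmetry as a formality, and transitivity reduced to producing synchronizing warps $p,q\in\mathcal{G}$ with $g_2\circ p = h_1\circ q$, after which one composes (this reduction is literally the paper's equation \eqref{eq: aux comp}; closure of $\mathcal{G}$ under composition, which you correctly flag, is routine for non-decreasing, endpoint-fixing, piecewise-$C^1$ maps with finitely many flats). But your proposal stops exactly where the paper's proof begins. The existence of the synchronizing pair $p,q$ is the entire mathematical content of the proposition, and you only assert it (``I expect this to follow\dots'', ``one can build $p$ and $q$ explicitly by\dots a common upper (or lower) envelope\dots or by a graph-completion argument''). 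The obstruction you must overcome is that elements of $\mathcal{G}$ may have flat regions, so $h_1$ is not invertible and $h_1^{-1}\circ g_2$ is undefined precisely at the levels where it is needed; an ``envelope of the two warps'' does not obviously yield an exact identity $g_2\circ p = h_1\circ q$ with $p,q$ monotone, endpoint-fixing, and back inside $\mathcal{G}$. Naming the right object (a monotone path in $[0,1]^2$ compatible with both warps) is a statement of the problem, not a solution to it.

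For comparison, here is what closing the gap actually requires, in your notation. The paper enumerates the finitely many flat regions of $h_1$ (finiteness is exactly the $C^1_{FF}$ hypothesis), with starting points $r_i$ and lengths $\Delta_i$, and lets $s_i$ be the first point where $g_2$ attains the value $h_1(r_i)$. It then inserts intervals of length $\Delta_i$ into the time axis: $p$ runs at unit speed, $p(t)=t-N(t)$, except on the inserted intervals $[t_i,t_i+\Delta_i)$ where it pauses at $s_i$, while $q$ equals $h_1^\dagger\circ g_2\circ p$ off the inserted intervals and traverses the flat region linearly, $q(t)=t-t_i+r_i$, on them. One must then verify (i) the matching identity $g_2\circ p = h_1\circ q$, using $h_1\circ h_1^\dagger=\mathrm{Id}$ (the paper's \eqref{eq: g g dagger = id}); (ii) continuity of $p$ and, more delicately, of $q$ at the points $t_i$ and $t_i+\Delta_i$, which needs properties of the generalized inverse $h_1^\dagger$ and a short contradiction argument at the left endpoints of flats; and (iii) a final rescaling of the extended domain $[0,1+N(1)]$ back to $[0,1]$ so that $p,q\in\mathcal{G}$. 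None of this is automatic, and it is precisely where finiteness of the flat regions is consumed: with Cantor-type warps (infinitely many flats) the construction, and indeed the proposition's use elsewhere in the paper, breaks down. So: right architecture and a correctly identified key lemma, but the lemma itself --- the heart of the proof --- is left unproven, which makes your proposal a plan rather than a proof.
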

    Two important consequences can be derived from this proposition. One is that having equivalence classes imply that different classes are automatically disjoint, making the classification problem well-posed. The other one is that since $\mathbb{S}_\phi$ is an equivalence class, a closed form solution to the mathematical classification problem statement above can be found with only one training sample per atomic class (see Theorem \ref{thm: CDTW guarantees}).

    \subsection{The CDTW divergence}\label{sec: CDTW}
    \begin{definition}[Informal definition CDTW]\label{def: CDTW}
    Given functions $s:[0,1]\to \mathbb R$, $\phi:[0,1]\to \mathbb R$, Continuous Dynamic Time Warping (CDTW) dissimilarity between $s$ and $\phi$ is defined as
     \begin{equation}\label{eq: cdtw12}
        CDTW(s,\phi):=\inf_{\gamma}\int_0^1
        h(\gamma(t))\|\gamma'(t)\|_1dt
    \end{equation}
    where the optimization is over all the \textit{continuous warping paths} (curves) $\gamma:[0,1]\to[0,1]\times [0,1]$ of the form $\gamma(t)=(g_1(t),g_2(t))$ such that its coordinate functions $g_1,g_2:[0,1]\to \R$ belong to $\mathcal{G}$; that is, they satisfy the following conditions together with some extra regularity assumptions (see Definition \ref{def: G}),
    \begin{itemize}
        \item $g_1(0)=0=g_2(0)$
        \item $g_1(1)=1=g_2(1)$
        \item are non-decreasing functions 
    \end{itemize}
    and where $\|\gamma'(t)\|_1:=|g_1'(t)|+|g_2'(t)|$ and the cost function $h$ is  $h(\gamma(t)):=|s(g_1(t))-\phi(g_2(t))|$.
    \end{definition}
    In the above definition the curves $\gamma(t) = (g_1(t),g_2(t))$ represent the proposed alignments between points in the domain of $s$ and $\phi$ , analogously to the functions $\overline{g_1},\overline{g_2}$ in Definition \ref{def: DTW}. The function $h$ is the mismatch between $s,\phi$ under said alignment. The total integral is the line integral of the mismatch along the curve $\gamma$, which account for the total mismatch standardize by arc-length.  
    \begin{theorem}[Informal statement - Proof in Section \ref{app: CDTW}]\label{thm: cdtw class}\, 
        Assume that $CDTW(s,\phi)=0$ and that the infimum in the definition of $CDTW$ is attained at a path $\gamma^*(t):= (g_1(t),g_2(t))$. Then, $s\circ g_1=\phi\circ g_2$. 
        Reciprocally, if there exists $g_1,g_2\in\mathcal{G}$ such that $s\circ g_1=\phi\circ g_2$, then $CDTW(s,\phi)=0$ and the minimum is attained by the curve $\gamma^*(t):=(g_1(t),g_2(t))$.
    \end{theorem}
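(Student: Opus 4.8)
The plan is to handle the two implications separately; the converse is essentially immediate, while the forward implication carries all the content. For the converse, suppose $g_1,g_2\in\mathcal{G}$ satisfy $s\circ g_1=\phi\circ g_2$. Along the admissible path $\gamma^*(t)=(g_1(t),g_2(t))$ the cost vanishes identically, $h(\gamma^*(t))=|s(g_1(t))-\phi(g_2(t))|\equiv 0$, so the line integral defining $CDTW(s,\phi)$ is $0$ on this path. Since $g_1,g_2$ are nondecreasing we have $\|\gamma'(t)\|_1=g_1'(t)+g_2'(t)\ge 0$ for every admissible $\gamma$, so the integrand is a product of nonnegative factors and the functional is bounded below by $0$. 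Hence $CDTW(s,\phi)=0$ and the infimum is attained at $\gamma^*$.

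For the forward implication, assume $CDTW(s,\phi)=0$ is attained at $\gamma^*=(g_1,g_2)$. First I would note that the integrand is nonnegative and has integral $0$, hence vanishes almost everywhere:
\[
    |s(g_1(t))-\phi(g_2(t))|\,\bigl(g_1'(t)+g_2'(t)\bigr)=0\qquad\text{for a.e. }t\in[0,1].
\]
Introduce the function $u(t):=s(g_1(t))-\phi(g_2(t))$, which is continuous because $s,\phi$ are continuous (hypothesis \textbf{H}) and $g_1,g_2$ are continuous. Let $A:=\{t\in[0,1]:u(t)\neq 0\}$; continuity of $u$ makes $A$ open. On $A$ the displayed identity forces $g_1'(t)+g_2'(t)=0$ a.e., and since both derivatives are nonnegative this gives $g_1'(t)=g_2'(t)=0$ a.e. on $A$.

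The crux of the argument, and the step I expect to be the main obstacle, is to promote this weighted almost-everywhere vanishing to genuine pointwise equality $u\equiv 0$; that is, to exclude the degenerate possibility that the optimal path remains stationary in both coordinates exactly on the set where the cost $h$ is positive, a region which contributes nothing to the line integral. Here I would invoke the regularity encoded in the definition of $\mathcal{G}$ (Definition \ref{def: G}) --- in particular that each coordinate function is absolutely continuous and strictly increasing. If $A$ were nonempty, being open it would contain an interval $(a,b)$, on which $g_1'=0$ a.e.; absolute continuity would then give $g_1(b)-g_1(a)=\int_a^b g_1'\,dt=0$, contradicting strict monotonicity. Therefore $A=\emptyset$, so $u\equiv 0$ and $s\circ g_1=\phi\circ g_2$, as claimed. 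The same strict-monotonicity property simultaneously rules out the only other way the argument could fail, namely a path that ``waits'' at $(0,0)$ or $(1,1)$ with mismatched endpoint values $s(0)\neq\phi(0)$ or $s(1)\neq\phi(1)$.
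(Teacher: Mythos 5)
There is a genuine gap in your forward implication, and it sits exactly at the step you yourself identified as the crux. Your argument to rule out the degenerate set $A=\{u\neq 0\}$ invokes the claim that each coordinate function of an admissible path is \emph{strictly increasing}, but Definition \ref{def: G} imposes no such condition: elements of $\mathcal{G}$ are merely non-decreasing, piecewise-$C^1$ functions with \emph{finitely many flat regions} (that is what $C^1_{FF}$ means). Flat regions are not a pathology to be excluded --- they are the whole point of warping paths (they are the continuous analogue of the horizontal/vertical moves in discrete DTW, where $\overline{g}_i(j+1)-\overline{g}_i(j)=0$ is explicitly allowed), and the paper's own Remark after Definition \ref{def: G} emphasizes that $g\in\mathcal{G}$ need not be invertible precisely because of them. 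So your contradiction ``$g_1(b)-g_1(a)=\int_a^b g_1'\,dt=0$ contradicts strict monotonicity'' contradicts nothing: a path that is simultaneously constant in both coordinates on a subinterval of $A$ is perfectly admissible, contributes zero to the line integral, and is exactly the scenario the theorem's proof must confront. Under your added hypothesis the theorem becomes nearly trivial, which is a sign the hypothesis is doing illegitimate work.

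The paper closes this gap differently, and your argument can be repaired along the same lines. The paper uses the finite-flat-regions hypothesis to split $[0,1]$ into finitely many pieces on which the set $Z=\{g_1'=0 \text{ and } g_2'=0\}$ has finitely many connected components; outside $Z$ the vanishing of the integrand forces $u:=s\circ g_1-\phi\circ g_2$ to vanish, while on each non-degenerate component of $Z$ both $g_1,g_2$ (hence $u$) are constant. A continuous function with finite range on an interval is constant, so $u\equiv c$ on $[0,1]$; finally $0=|c|\int_0^1(g_1'+g_2')\,dt$ together with the boundary conditions $g_i(0)=0$, $g_i(1)=1$ (which prevent $g_1,g_2$ from both being constant) forces $c=0$. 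A streamlined repair of your own argument in the same spirit: take a maximal open component $(a,b)$ of $A$; by the fundamental theorem of calculus (valid under the piecewise-$C^1$ regularity) $g_1,g_2$ are constant on $(a,b)$, hence $u$ is a nonzero constant there; if $a>0$ or $b<1$ then that endpoint lies outside $A$, where $u=0$, contradicting continuity of $u$; if $(a,b)=(0,1)$ then $g_1$ is constant on $[0,1]$, contradicting $g_1(0)=0\neq 1=g_1(1)$. Either way, the decisive tools are continuity of $u$, the finite-flat-regions structure, and the endpoint normalization of $\mathcal{G}$ --- not strict monotonicity, which is unavailable.
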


    The importance of Theorem \ref{thm: cdtw class} is that $CDTW(s,\phi)=0$ characterizes atomic class $\mathbb S_{\phi}$ defined by expression \eqref{eq: atomic class}.

    \subsubsection{CDTW computation: approximation with DTW}\label{sec: dtw - cdtw}

    CDTW is, in general, very difficult to compute exactly. The main advantage of the continuous formulation is theoretical since it provides an exact representation of the model classes that does not depend on the choice of discretization. In our experiments, we approximate CDTW using a weighted DTW. The justification for this choice is given in Theorem \ref{thm: CDTW-DTW param space} below, which claims that under a sufficiently fine discretization of the signals, DTW is a good enough approximation of the proposed CDTW. 
    
    Let us introduce the \textit{weighted} version of Definition \ref{def: DTW}:
    \begin{equation}\label{eq: DTW problem w}
        DTW_{\omega}(\overline{s},\overline{\phi}) := \min_{\overline{g}_1,\overline{g}_2}\sum_{i=0}^L  |\overline{s}(\overline{g}_1(i)) -\overline{\phi}(\overline{g}_2(i) )|\omega_i         
    \end{equation}
    where 
    the discrete warping path $\overline \gamma = (\overline{g}_1,\overline{g}_2)$ is as in Definition \ref{def: DTW},
    and the vector of weights $\omega$ depends on $\overline \gamma$
    and is defined by
    \begin{equation}\label{eq: weight}
            \omega_i:=\frac{|\overline{g}_1(i)-\overline{g}_1(i-1)|+|\overline{g}_2(i)-\overline{g}_2(i-1)|}{N}.
    \end{equation}
    Notice that if we
    consider the $[0,N]\times[0,N]$ square, and subdivide it into $N\times N$ unit sub-squares, then the points $\{(\overline{g}_1(i),\overline{g}_2(i))\}_{i=0}^L$ coincide with some vertices of these sub-squares, starting from $(0,0)$ and ending at $(N,N)$. By connecting these vertices through horizontal, vertical, or diagonal segments, we observe that 
    $\omega_i$ is either 1 (if the corresponding segment is horizontal or vertical) or 2 (if diagonal).

    For the next theorem we need some terminology. We say that \textit{the uniform partition} of an interval $[0,1]$ with step size $\Delta$ is the set of points $u_j=j\Delta, \text{ where } \Delta = \frac{1}{N} \text{ for some } N\in\mathbb{N} \text{ and } j\in\{0,\dots,N\}$. Also, we say that \textit{a discretization of $s:[0,1]\to \mathbb{R}$ over the partition} $\{u_j\}_{j=0}^N$ of $[0,1]$ is the vector $\overline{s}\in \mathbb{R}^{N+1}$ given by $\overline{s}(j) = s(u_j)$. 

    \begin{theorem}[CDTW -- DTW - Proof in Section \ref{app: CDTW-DTW}]\label{thm: CDTW-DTW param space} Let $s,\phi:[0,1]\to \mathbb R$. For every $\varepsilon>0$, there exists $\delta>0$ such that if we consider a uniform partition of $[0,1]$ with step-size $\Delta \leq \delta$, then
        \begin{equation*}
            |CDTW(s,\phi)-DTW(\overline{s}, \overline{\phi})|<\varepsilon
        \end{equation*}
    for $\overline{s},\overline{\phi}\in \mathbb R^{N+1}$  the corresponding discretizations of $s$ and $\phi$.
    \end{theorem}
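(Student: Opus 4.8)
The plan is to prove a two-sided estimate, $|CDTW(s,\phi) - DTW_{\omega}(\overline s,\overline\phi)| < \varepsilon$, controlling each inequality separately, where $DTW_{\omega}$ is the weighted divergence \eqref{eq: DTW problem w} (the weights \eqref{eq: weight} are precisely what make the discrete cost approximate the line integral \eqref{eq: cdtw12}). The starting observation is that the $CDTW$ functional is invariant under orientation-preserving reparametrization of $\gamma$, so it depends only on the geometric image of the warping path; moreover, since $g_1,g_2$ are non-decreasing we have $\|\gamma'(t)\|_1 = g_1'(t)+g_2'(t)$, whence $\int_0^1 \|\gamma'(t)\|_1\,dt = (g_1(1)+g_2(1)) - (g_1(0)+g_2(0)) = 2$ for every admissible path. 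Parametrizing by $\ell^1$-arclength $\tau = g_1+g_2 \in [0,2]$ turns the cost into $\int_0^2 h\,d\tau$, an integral of $h$ against a measure of total mass $2$. The same holds discretely: by telescoping and monotonicity $\sum_i \omega_i = \frac1N\big(\sum_i|\overline g_1(i)-\overline g_1(i-1)| + \sum_i|\overline g_2(i)-\overline g_2(i-1)|\big) = \frac1N(N+N) = 2$. Thus both sides are quadratures of $h$ of equal total mass, and the whole argument reduces to comparing a smooth monotone curve with a nearby monotone lattice path via the uniform continuity of $s$ and $\phi$. Let $\eta_s,\eta_\phi$ be their moduli of continuity on $[0,1]$, which are finite and vanish at $0$ since $s,\phi$ are continuous on a compact interval under hypothesis \textbf{H}.

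For the bound $CDTW(s,\phi) \le DTW_{\omega}(\overline s,\overline\phi) + \varepsilon$, I take an optimal discrete warping path $\overline\gamma = (\overline g_1,\overline g_2)$ and rescale its vertices to $P_i = (\overline g_1(i)/N,\ \overline g_2(i)/N) \in [0,1]^2$. Joining consecutive vertices by segments yields a monotone, piecewise-linear curve $\gamma$ from $(0,0)$ to $(1,1)$; each segment has $\ell^1$-length exactly $\omega_i$, and along it both coordinates stay within $1/N$ of $P_i$, so $|h(\gamma) - h(P_i)| \le \eta_s(1/N)+\eta_\phi(1/N)$ there. Integrating against arclength gives $\big|\int_\gamma h\,d\tau - \sum_i h(P_i)\omega_i\big| \le (\eta_s(1/N)+\eta_\phi(1/N))\sum_i\omega_i = 2(\eta_s(1/N)+\eta_\phi(1/N))$. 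Since $\sum_i h(P_i)\omega_i = DTW_{\omega}(\overline s,\overline\phi)$ and $\gamma$ (after the smoothing discussed below to enter $\mathcal{G}$) is admissible for $CDTW$, this yields $CDTW(s,\phi) \le \int_\gamma h\,d\tau \le DTW_{\omega}(\overline s,\overline\phi) + 2(\eta_s(1/N)+\eta_\phi(1/N))$.

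For the reverse bound $DTW_{\omega}(\overline s,\overline\phi) \le CDTW(s,\phi) + \varepsilon$, I fix a continuous path $\gamma=(g_1,g_2)$ that is $\varepsilon/2$-optimal for \eqref{eq: cdtw12} (the infimum need not be attained) and build a monotone lattice path $\sigma$ through grid vertices that tracks $\gamma$. The key geometric lemma is that any monotone curve from $(0,0)$ to $(1,1)$ admits a monotone staircase $\sigma$ with unit steps (horizontal, vertical or diagonal, all allowed by Definition \ref{def: DTW}) such that, under the common arclength parametrization on $[0,2]$, $\sup_\tau \|\gamma(\tau)-\sigma(\tau)\|_\infty \le c/N$ for a universal constant $c$; concretely one routes $\sigma$ through the grid points nearest to $\gamma$ on successive anti-diagonals, monotonicity guaranteeing no backtracking. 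Then $\big|\int_\gamma h\,d\tau - \int_\sigma h\,d\tau\big| \le 2(\eta_s(c/N)+\eta_\phi(c/N))$, and replacing $h$ by its vertex values on $\sigma$ as before shows the discrete cost of $\sigma$ differs from $\int_\gamma h\,d\tau$ by at most $2(\eta_s((c+1)/N)+\eta_\phi((c+1)/N))$. As $DTW_{\omega}$ is a minimum over discrete paths, $DTW_{\omega}(\overline s,\overline\phi) \le CDTW(s,\phi)+\varepsilon/2+2(\eta_s((c+1)/N)+\eta_\phi((c+1)/N))$.

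Combining the two estimates, given $\varepsilon>0$ I absorb the $\varepsilon/2$ from near-optimality and then choose $\delta>0$ so that $1/N \le \delta$ forces $2(\eta_s((c+1)/N)+\eta_\phi((c+1)/N)) < \varepsilon/2$, possible because the moduli vanish at $0$; this gives the claim with $\delta$ depending only on $\varepsilon$, $s$ and $\phi$. I expect the main obstacle to be twofold and structural rather than analytic: first, the geometric tracking lemma — producing a monotone lattice path within $O(1/N)$ of an arbitrary monotone curve under the arclength correspondence, with a constant independent of the curve — which is intuitively clear but requires a careful monotone ``chasing'' construction; and second, verifying $\mathcal{G}$-admissibility, since the piecewise-linear interpolants have corners and flat pieces and must be perturbed to satisfy the regularity conditions of Definition \ref{def: G} without altering the reparametrization-invariant cost by more than $\varepsilon$. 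Both are then closed off by the single uniform-continuity estimate above.
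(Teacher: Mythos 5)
Your proposal is correct and takes essentially the same route as the paper's proof: both directions are handled by lifting an optimal discrete path to a monotone polygonal continuous path and, conversely, tracking an $\varepsilon/2$-optimal continuous path (the paper also notes the infimum may not be attained) with a monotone lattice path, the two costs being compared through the mass identity $\sum_i \omega_i = 2 = \int_0^1\|\gamma'(t)\|_1\,dt$ and a uniform-continuity estimate (the paper uses the Lipschitz bound $\max\{\|s'\|_\infty,\|\phi'\|_\infty\}$), and your ``geometric tracking lemma'' is exactly what the paper's Claims establish by routing the lattice path through bottom-left corners of the cells visited by $\gamma^*$ and projecting along the direction $(1,-1)$, which preserves the parameter $x+y$ and yields the $3\Delta$ proximity bound. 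Your second structural worry is unnecessary: monotone piecewise-linear curves with finitely many segments already lie in $\mathcal{G}$ per Definition \ref{def: G} (piecewise $C^1$ with finitely many flat regions), so no smoothing or perturbation step is needed.
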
 

    \subsection{The $d_T$ Divergence}\label{subsec: sdb} 
    \begin{definition}\label{def sdb}
        Given two differentiable signals $s,\phi:[0,1]\to\mathbb{R}$ we define the following transport-like dissimilarity 
        \begin{equation}
            d_{T}(s,\phi) = \left(\int_0^1 |s(x)-\phi(g^s_\phi(x))|^2 \sqrt{(g^s_\phi)'(x)}\ dx \right)^{1/2}
        \end{equation}
        where $g^s_\phi$ is the optimal transport map between the probability density functions $|s'|/\|s'\|_1$ and $|\phi'|/\|\phi'\|_1$, that is, it  is characterized by the closed-formula    
        \begin{equation}\label{eq: g_explicit}
            g^s_\phi=F_{\frac{|\phi'|}{\|\phi'\|_1}}^{\dagger}\circ F_{\frac{|s'|}{\|s'\|_1}}.
        \end{equation}
    \end{definition}
    
    The motivation of Definition \ref{def sdb} comes from doing some formal manipulations on the equivalence relation $s\sim \phi$ if and only if $s\circ g_1 = \phi\circ g_2  $. Specifically, differentiating on both sides of the equality, and using that $g_1, g_2$ are non-decreasing we get $(|s'|\circ g_1) g_1' = (|\phi'|\circ g_2) g_2'$. Thus, denoting $F_{|s'|}(t) = \int_{-\infty}^t |s'(x)| dx$ (resp for $F_{|\phi'|}$), and integrating we get 
    \begin{align*}
        F_{|s'|}(g_1(t)) = F_{|\phi'|}(g_2(t))\qquad
        \text{and} \qquad
        g_2\circ g_1^{-1} = F_{|\phi'|}^{-1} \circ F_{|s'|},
    \end{align*}
    which resembles the transport map between ``densities'' $|s'|$ and $|\phi'|$. To generalize this formula for the cases where equality does not hold, we need to make sure that the composition on the right is still well defined, so it is necessary to normalize $|s'|$ and $|\phi'|$ to unit norm 1. As a consequence we use formula $\eqref{eq: g_explicit}$.  
    
   \begin{lemma}[Informal statement -  Proof in Section \ref{app: d_T}]\label{lem: extended graph} 
        Let $s,\phi: [0,1]\to \mathbb R$. If $s\circ g_1=\phi\circ g_2$ for some functions $g_1,g_2\in \mathcal{G}$, then $d_T(s,\phi) = 0$.
    \end{lemma}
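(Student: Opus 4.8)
The plan is to make rigorous the formal derivation given after Definition~\ref{def sdb}, the delicate point being the treatment of the intervals where $\phi$ is locally constant.

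I would begin by differentiating the hypothesis $s\circ g_1=\phi\circ g_2$. Since $g_1,g_2\in\mathcal{G}$ are non-decreasing we have $g_1',g_2'\ge 0$, so taking absolute values gives $(|s'|\circ g_1)\,g_1'=(|\phi'|\circ g_2)\,g_2'$. Integrating from $0$ to $t$ and changing variables (using $g_1(0)=g_2(0)=0$) yields the identity
\begin{equation*}
F_{|s'|}(g_1(t))=F_{|\phi'|}(g_2(t)),\qquad t\in[0,1],
\end{equation*}
where $F_{|s'|}(r)=\int_0^r|s'|$ and likewise for $F_{|\phi'|}$. Evaluating at $t=1$, where $g_1(1)=g_2(1)=1$, gives $\|s'\|_1=F_{|s'|}(1)=F_{|\phi'|}(1)=\|\phi'\|_1$. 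This equal-mass fact is what I would use to cancel the normalizing constants in \eqref{eq: g_explicit}: it shows $g^s_\phi=F_{|\phi'|}^{\dagger}\circ F_{|s'|}$, and hence, composing with $g_1$ and substituting the CDF identity,
\begin{equation*}
g^s_\phi(g_1(t))=F_{|\phi'|}^{\dagger}\big(F_{|\phi'|}(g_2(t))\big).
\end{equation*}

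The main obstacle is that $F_{|\phi'|}^{\dagger}\circ F_{|\phi'|}$ need not be the identity: where $\phi'$ vanishes, $F_{|\phi'|}$ is flat and the generalized inverse cannot recover the point. By the properties recalled in Section~\ref{app: gen inv}, $F_{|\phi'|}^{\dagger}(F_{|\phi'|}(z))$ equals the right endpoint $z_+$ of the maximal interval on which $F_{|\phi'|}$ is constant and equal to $F_{|\phi'|}(z)$. On $[z,z_+]$ one has $|\phi'|=0$ almost everywhere, so the regularity assumption on $\phi$ (Definition~\ref{def: C1_FF}) forces $\phi$ to be constant there. Consequently, even though $g^s_\phi(g_1(t))\ge g_2(t)$ with possibly strict inequality, the values of $\phi$ agree, $\phi\big(g^s_\phi(g_1(t))\big)=\phi(g_2(t))$. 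Combining this with the hypothesis $\phi(g_2(t))=s(g_1(t))$ gives $\phi\big(g^s_\phi(g_1(t))\big)=s(g_1(t))$ for every $t$.

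To finish, I would use that $g_1$ is continuous and non-decreasing with $g_1(0)=0$ and $g_1(1)=1$, hence surjective onto $[0,1]$; writing $x=g_1(t)$, the last identity becomes $\phi(g^s_\phi(x))=s(x)$ for all $x\in[0,1]$. Therefore the first factor of the integrand defining $d_T(s,\phi)$ vanishes identically, so $d_T(s,\phi)=0$ regardless of the weight $\sqrt{(g^s_\phi)'}$. The only genuinely subtle step is the flat-part analysis: it is precisely there that the exact generalized-inverse identity and the $\mathcal{C}^1_{FF}$ regularity of hypothesis~\textbf{H} are needed, guaranteeing that flat stretches of $F_{|\phi'|}$ coincide with intervals of constancy of $\phi$.
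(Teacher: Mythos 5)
Your proof is correct, and it takes a genuinely different route in its second half. Both arguments begin identically: differentiate $s\circ g_1=\phi\circ g_2$, integrate to obtain $F_{|s'|}\circ g_1=F_{|\phi'|}\circ g_2$, and deduce $\|s'\|_1=\|\phi'\|_1$ so that the normalizations in \eqref{eq: g_explicit} cancel. From there the paper changes variables $t=g_1(u)$ in the integral defining $d_T$ and splits $[0,1]$ into the flat regions $I_i$ of $F_{\frac{|\phi'|}{\|\phi'\|_1}}\circ g_2$ and their complement: outside $\bigcup I_i$ it kills the integrand through the \emph{difference} factor (via part 3 of Proposition \ref{prop: properties_gen_inv}), while on each $I_i$ it kills it through the \emph{weight} factor, since $F^{\dagger}_{\frac{|\phi'|}{\|\phi'\|_1}}\circ F_{\frac{|s'|}{\|s'\|_1}}\circ g_1$ is constant there and hence has zero derivative. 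You instead prove the stronger pointwise statement $s(x)=\phi\bigl(g^s_\phi(x)\bigr)$ for every $x\in[0,1]$: you use the exact characterization $F^{\dagger}(F(z))=z_+$ (the right endpoint of the maximal flat interval of the continuous CDF $F_{|\phi'|}$ through $z$), observe that $C^1_{FF}$ regularity (Definition \ref{def: C1_FF}) forces $\phi$ to be constant on that flat stretch, and then use surjectivity of $g_1$ to cover all of $[0,1]$; the weight $\sqrt{(g^s_\phi)'}$ becomes irrelevant (and the set where $g^s_\phi$ fails to be differentiable is null, so the integral of the vanishing integrand is zero). Your route buys a stronger and more interpretable conclusion — the transport map exactly warps $\phi$ onto $s$, which also illuminates Remark \ref{rmk: graph of transport} — and avoids the change of variables and region-by-region bookkeeping. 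The trade-off is that it leans harder on hypothesis \textbf{H}: constancy of $\phi$ on flat stretches of $F_{|\phi'|}$ is exactly what fails for Devil's-staircase-type signals, whereas the paper's argument never needs that fact, disposing of the flat regions purely through the vanishing derivative of the weight.
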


    \begin{remark}\label{rmk: graph of transport}
        Let $s,\phi,g_1,g_2$ be as in the previous lemma. Then, $F_{\frac{|s'|}{\|s'\|_1}}\circ g_1 = F_{\frac{|\phi'|} {\|\phi'\|_1}}\circ g_2$. 
        Now, since the cumulative distribution functions are non decreasing, it can be proven we can choose $h_1,h_2\in \mathcal{G}$ such that $F_{\frac{|s'|}{\|s'\|_1}}\circ h_1 = F_{\frac{|\phi'|} {\|\phi'\|_1}}\circ h_2$ and 
        \begin{align}
            g_\phi^s = F_{\frac{|\phi'|} {\|\phi'\|_1}}^\dagger \circ F_{\frac{|s'|}{\|s'\|_1}} = h_2\circ h_1^\dagger. 
        \end{align}
        For details, see Lemma \ref{lm: reparametrization} in Section \ref{app: d_T}.
        More formally, the graph of $g_\phi^s$ is included in the graph of the curve $(h_1,h_2)$. This remark provides an intuition that the warping functions obtained with dynamic time warping divergences are approximately parametrizations of the graph of transport maps (at least, in the perfect matching case).  
    \end{remark}

    \section{Relation between $d_T$ and CDTW and exact classification without noise}
 
    \begin{theorem}[Informal statement -  Proof in Section \ref{app: classif}]\label{thm: cdtw implies sdb}
        Let $s,\phi:[0,1]\to \R$. Then,
        \begin{equation}
            CDTW(s,\phi) = 0 \Longrightarrow d_T(s,\phi) = 0.
        \end{equation}
    \end{theorem}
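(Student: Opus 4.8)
The plan is to obtain this implication as an essentially immediate consequence of the two preceding results, Theorem \ref{thm: cdtw class} and Lemma \ref{lem: extended graph}, which already bridge the two divergences through the ``perfect matching'' condition $s\circ g_1=\phi\circ g_2$. First I would invoke Theorem \ref{thm: cdtw class}: assuming $CDTW(s,\phi)=0$ and that the infimum in \eqref{eq: cdtw12} is attained at some warping path $\gamma^\ast=(g_1,g_2)$, the theorem yields the exact alignment $s\circ g_1=\phi\circ g_2$. Crucially, the memberships $g_1,g_2\in\mathcal{G}$ come for free, since they are built into the admissible class of curves in Definition \ref{def: CDTW}.

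Second, I would feed precisely this pair $(g_1,g_2)$ into Lemma \ref{lem: extended graph}, whose hypothesis is exactly ``$s\circ g_1=\phi\circ g_2$ for some $g_1,g_2\in\mathcal{G}$'' and whose conclusion is $d_T(s,\phi)=0$. Thus the body of the argument is nothing more than chaining these two statements; no fresh computation with the explicit transport map \eqref{eq: g_explicit} is required, because that bookkeeping (the reparametrization turning the warping pair into the optimal map $g^s_\phi$) is already packaged inside the lemma and spelled out in Remark \ref{rmk: graph of transport}.

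The main obstacle is the attainment hypothesis: Theorem \ref{thm: cdtw class} presupposes that the infimum in \eqref{eq: cdtw12} is realized by an admissible path, and without it the first step breaks down. To remove this assumption I would pass to a minimizing sequence of admissible paths $\gamma_n=(g_{1,n},g_{2,n})$ whose line integrals tend to $0$, and exploit that their coordinates are uniformly bounded, non-decreasing functions — hence of uniformly bounded variation — to extract, via a Helly-type selection theorem, subsequential pointwise limits $g_1,g_2$; one then argues the limiting path still has vanishing $CDTW$-cost, so $s\circ g_1=\phi\circ g_2$ holds almost everywhere. The delicate point is that the limits are only guaranteed to be monotone, whereas $\mathcal{G}$ carries the extra regularity imposed in Definition \ref{def: G} and hypothesis \textbf{H}. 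One must therefore either verify that the limit still lies in $\mathcal{G}$, or weaken Lemma \ref{lem: extended graph} so that its conclusion $d_T=0$ survives for this larger class of limit warpings, passing to the limit inside the integral defining $d_T$ by dominated convergence. Within the paper's standing regularity assumptions this obstacle dissolves, and the two-line chaining of Theorem \ref{thm: cdtw class} and Lemma \ref{lem: extended graph} is the entire proof.
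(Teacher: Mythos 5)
Your proposal matches the paper's own proof exactly: the paper's argument is precisely the two-line chaining of Theorem \ref{thm: cdtw class} (to extract $g_1,g_2\in\mathcal{G}$ with $s\circ g_1=\phi\circ g_2$ from $CDTW(s,\phi)=0$) followed by Lemma \ref{lem: extended graph}. Your additional treatment of the attainment issue is, if anything, more careful than the paper, whose proof invokes Theorem \ref{thm: cdtw class} without remarking that its hypothesis formally requires the infimum to be realized by an admissible path.
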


    \begin{theorem}[CDTW - $d_T$ Guarantees - Proof in Section \ref{app: classif}]\label{thm: CDTW guarantees}
        In the Classification Problem \ref{def: classification problem} assume that the training set contains at least one sample $s_m^{(c)}$ for each atomic class $\mathbb{S}_{\phi_m^{(c)}}$. Then, the 1-NN classification method with divergence given by CDTW 
        perfectly discriminates between classes. That is, given $s$  an unclassified signal, then its class is recovered as
        \begin{equation} \label{eq: classification formula}
            Class(s) = Class\left(\arg \min_{m,c} CDTW(s,s_m^{(c)})\right),
        \end{equation}
        where $Class((m,c)) = c$.
        If instead of $CDTW$ we use $d_T$, we obtain
        \begin{equation} \label{eq: classification formula d_T}
            Class(s) \in Class\left(\arg \min_{m,c} d_T(s,s_m^{(c)})\right).
        \end{equation}
    \end{theorem}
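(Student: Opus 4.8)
The plan is to combine the equivalence-relation structure of Proposition \ref{prop: equivalence relation} with the zero-level-set characterizations of the two divergences. Fix an unclassified signal $s$ that genuinely belongs to one of the model classes, say $s\in\mathbb{S}_{\phi_{m_0}^{(c_0)}}\subseteq\mathbb{S}^{(c_0)}$, so that its true label is $Class(s)=c_0$. By hypothesis the training set contains a sample $s_{m_0}^{(c_0)}$ drawn from exactly this atomic class. Since $\mathbb{S}_{\phi_{m_0}^{(c_0)}}$ is an equivalence class (Proposition \ref{prop: equivalence relation}), both $s$ and $s_{m_0}^{(c_0)}$ lie in it, so $s\sim s_{m_0}^{(c_0)}$; that is, there exist $g_1,g_2\in\mathcal{G}$ with $s\circ g_1=s_{m_0}^{(c_0)}\circ g_2$. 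The converse direction of Theorem \ref{thm: cdtw class} then gives $CDTW(s,s_{m_0}^{(c_0)})=0$, and Lemma \ref{lem: extended graph} gives $d_T(s,s_{m_0}^{(c_0)})=0$. Hence, for either divergence, the minimum over the training set equals $0$ and is attained by the correct-class sample $s_{m_0}^{(c_0)}$.

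The crux of the $CDTW$ statement is the reverse inclusion: no wrong-class sample may also attain $0$. Suppose toward a contradiction that $CDTW(s,s_m^{(c)})=0$ for a training sample with $c\neq c_0$ and $s_m^{(c)}\in\mathbb{S}_{\phi_m^{(c)}}$. Assuming, as guaranteed under hypothesis \textbf{H}, that the infimum defining $CDTW$ is attained, the forward direction of Theorem \ref{thm: cdtw class} yields $h_1,h_2\in\mathcal{G}$ with $s\circ h_1=s_m^{(c)}\circ h_2$, i.e.\ $s\sim s_m^{(c)}$. Because $\mathbb{S}_{\phi_m^{(c)}}$ is the equivalence class of $s_m^{(c)}$ (Proposition \ref{prop: equivalence relation}), it follows that $s\in\mathbb{S}_{\phi_m^{(c)}}\subseteq\mathbb{S}^{(c)}$. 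But $s\in\mathbb{S}^{(c_0)}$, and the classes $\mathbb{S}^{(c)}$ are pairwise disjoint---the well-posedness consequence of Proposition \ref{prop: equivalence relation}---so $c=c_0$, a contradiction. Therefore $CDTW(s,s_m^{(c)})>0$ for every wrong-class sample, the argmin in \eqref{eq: classification formula} is realized only at samples labelled $c_0$, and \eqref{eq: classification formula} follows.

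For $d_T$ the conclusion is necessarily weaker, because Lemma \ref{lem: extended graph} supplies only one implication: a perfect warping match forces $d_T=0$, but there is no converse guaranteeing that $d_T(s,s_m^{(c)})=0$ implies $s\sim s_m^{(c)}$. Thus, although the correct-class sample still realizes the minimal value $0$, a wrong-class sample could in principle also attain $0$, so the argmin set may carry several labels. The most one can assert is that the true label is among them: since $d_T(s,s_{m_0}^{(c_0)})=0=\min_{m,c}d_T(s,s_m^{(c)})$, the sample $s_{m_0}^{(c_0)}$ belongs to $\arg\min_{m,c}d_T(s,s_m^{(c)})$, whence $c_0\in Class\big(\arg\min_{m,c}d_T(s,s_m^{(c)})\big)$, which is exactly the membership statement \eqref{eq: classification formula d_T}. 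This asymmetry is consistent with Theorem \ref{thm: cdtw implies sdb}, which places the $CDTW$ zero set inside the $d_T$ zero set and so identifies $d_T$ as a relaxation of $CDTW$.

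The step I expect to be the main obstacle is the reverse inclusion in the second paragraph, and specifically the invocation of the forward direction of Theorem \ref{thm: cdtw class}: one must be sure the infimum in the definition of $CDTW$ is attained, which is precisely where the regularity built into hypothesis \textbf{H} must be used, and one must confirm that the generative model places atomic classes with different labels into genuinely distinct equivalence classes. Checking that this disjointness is a structural feature of the standing assumptions, rather than an ad hoc hypothesis on the templates, is the delicate point on which the perfect-separation claim for $CDTW$ ultimately rests.
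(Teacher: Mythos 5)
Your proof is correct and follows essentially the same route as the paper's: use Proposition \ref{prop: equivalence relation} together with Theorem \ref{thm: cdtw class} to get $CDTW(s,s_{m_0}^{(c_0)})=0$, rule out zero for wrong-class samples because atomic classes are equivalence classes, and obtain the weaker $d_T$ statement from $d_T\geq 0$ combined with Theorem \ref{thm: cdtw implies sdb} (equivalently Lemma \ref{lem: extended graph}). If anything you are more explicit than the paper, which simply asserts $CDTW(s,s_m^{(c)})>0$ for all $(m,c)\neq(m_0,c_0)$, whereas you correctly note that this step invokes the forward direction of Theorem \ref{thm: cdtw class} and therefore hinges on attainment of the infimum defining $CDTW$.
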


    The pseudocode to compute the class of an unknown signal is provided in Algorithm \ref{alg: dt}. The theorem above is significant in two key ways. First, it establishes that—under the assumptions in Problem Statement \ref{def: classification problem}—classification can be mathematically achieved using only one training sample per atomic class. In this case, perfect classification is attained using the CDTW method.
    Second, by introducing the $d_T$ divergence, and comparing the computational cost of finding DTW alignments vs. transport maps between functions, we make the continuous CDTW formulation tractable and achieve greater computational efficiency compared to the traditional DTW distance (see also Figure \ref{fig: time} below). 

    \begin{algorithm}[H]
        \caption{1D Classification}
        \label{alg: dt}
        \begin{algorithmic}
        \STATE {\bfseries Inputs:}\\ 
            - Sample signals $\{s_m^c\}$ from different atomic classes (finitely many)\\ 
            - Unclassified signal $s$\\ - Choice of dissimilarity function $d(\cdot,\cdot)$:  $d_T$, DTW, or $L^2$-norm
        \STATE {\bfseries Output:}
        Predicted class $c_0$ of signal $s$
        \STATE $c_0 \gets 0$
        \STATE $d_0 \gets \infty$
        \FOR{$m,c$}
            \IF{$d(s,s_m^c) < d_0$}  
                \STATE $d_0 \gets d(s,s_m^c)$
                \STATE $c_0 \gets c$
            \ENDIF
        \ENDFOR
        \STATE {\bfseries return} $c_0$
        \end{algorithmic}
    \end{algorithm}

    \noindent\textbf{Simulated data example: }
        Figure \ref{fig: simulated_low_regime} presents the classification accuracy as a function of the number of training samples on a synthetic data experiment. In this example, training and test samples were generated according to the problem statement described in Problem \ref{def: classification problem}. In this case, each simulated dataset consists of two classes, where each class contains 1 or 5 atomic subclasses. These atomic subclasses are generated from a template $\phi_m^{(c)}$ and deformed using random increasing functions $g\in\mathcal{G}$ by the formula $\phi_m^{(c)}\circ g$ (see Figure \ref{fig: simulated_low_regime} --top part-- for examples of signals). The class of a test sample $s$ is assigned via the nearest neighbor algorithm stated in expression \eqref{eq: classification formula}. 
        \begin{figure}[ht!]
            \centering
            \includegraphics[width=0.65\linewidth]{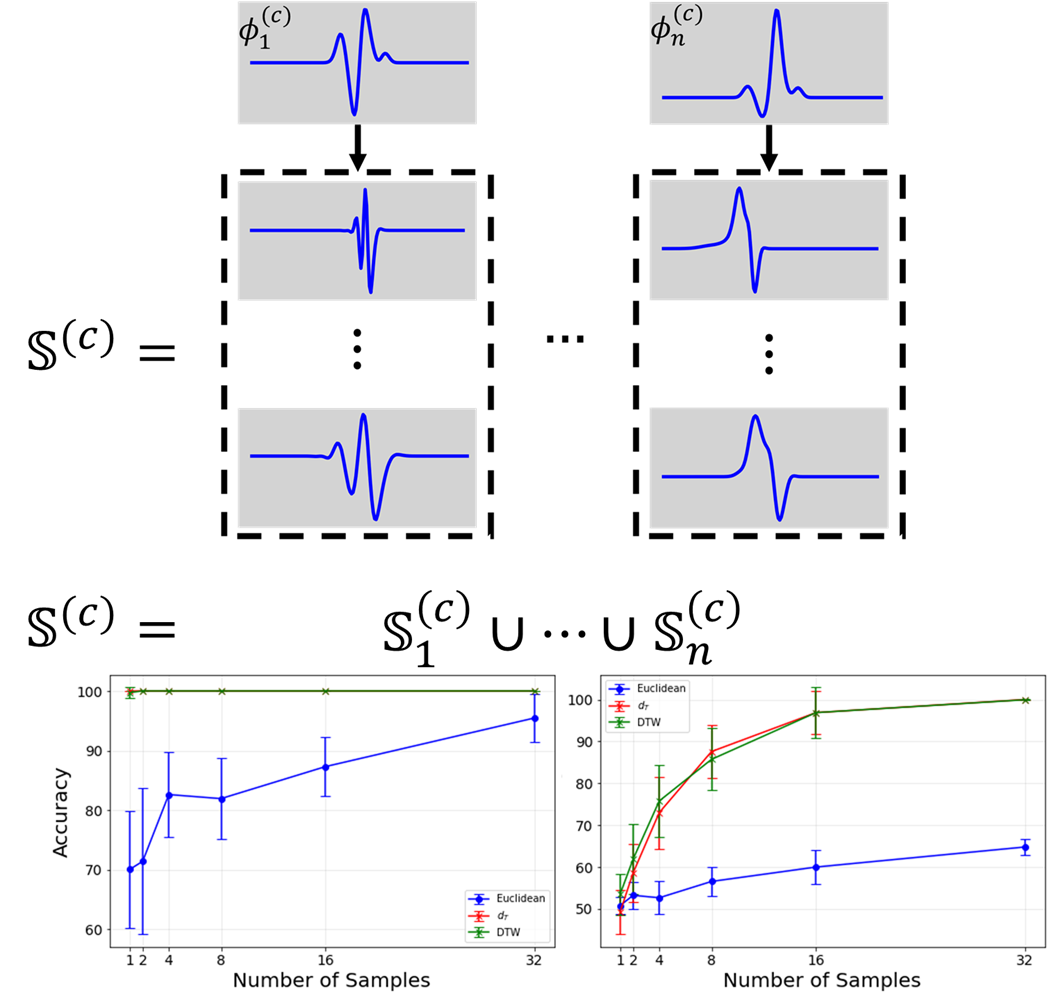}
            \caption{Nearest neighbor classification using different distances on 2 distinct simulated classification problems. The top row contains examples of atomic classes. The bottom row contains the accuracy obtained by different distances as a function of the number of samples in the training set. Left: 1 atomic class per class. Right: 5 atomic classes per class.}
            \label{fig: simulated_low_regime}
        \end{figure}
        
        For classification, we randomly selected 1, 2, 4, 8, 16  and 32 samples per class and applied Equation \eqref{eq: classification formula}. The experiment was repeated 10 times, and we report the average classification accuracies for the 1-nearest neighbor (1NN) classifier using Euclidean, DTW, and $d_T$ metrics. Results show that when there is only one template per class, DTW and $d_T$ achieve $100\%$ accuracy. However, as the number of templates per class increases, DTW and $d_T$ only reach $100\%$ accuracy when at least one sample from each atomic subclass is included in the training set. This phenomenon explains the accuracy trends observed in Figure \ref{fig: simulated_low_regime}.

        Computation time in seconds for the different methods is shown in Figure \ref{fig: time}. It is evident that DTW is several orders of magnitude slower than $d_T$. For real-world datasets, this can become prohibitively expensive. This observation aligns with the fact that computing DTW has a time complexity of $\mathcal{O}(N^2)$ (or, at best, $\mathcal{O}(N^2 / \log\log N)$), whereas computing $d_T$ requires only $\mathcal{O}(N \log N)$ operations. 

        \begin{figure}[ht!]
            \centering
            \includegraphics[width=0.5\linewidth]{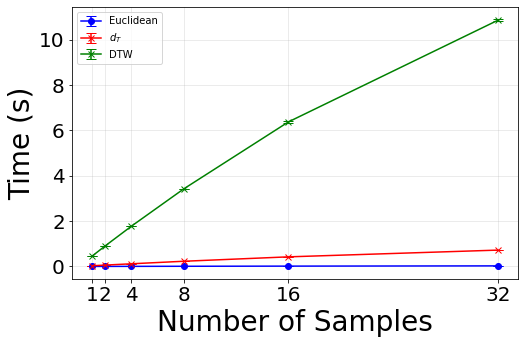}
            \caption{Time requirements for $d_T$, DTW, and Euclidean 1-NN classification with $1,2,4,8,16,32$ samples per class in the training set. Each signal is discretized using 150 evaluation points. The full dataset consists of 250 signals.}
            \label{fig: time} 
        \end{figure}

    \section{Experiments}\label{sec: experiments}
        In this section we describe three experiments performed on data extracted from the UCR time series classification archive \cite{dau2019ucr} in order to demonstrate the effectiveness of the divergence proposed above on real data, as compared to the standard Euclidean and DTW distances. A tutorial on how to use the $d_T$ divergence is included in the Python Library \texttt{PyTransKit} \cite{tutorial}. Three experiments are proposed in order to 1) demonstrate whether the newly proposed method has real world applicability, 2) explore its relation to the traditional DTW distance, and 3) show its performance as a function of the number of training samples. \\

        \noindent\textbf{Experiment 1: } To determine whether there exist real world datasets where the proposed approach can be effective, as compared to existing distances, we evaluate the performance of 1NN classification method using three different distance metrics\footnote{We use the word `metric' in the general sense of the word, although $d_T$ and $DTW$ are not metrics in the mathematical sense but just dissimilarity measures.}: Euclidean, DTW and $d_T$.   These metrics were evaluated on time series datasets from the UCR time series classification archive \cite{dau2019ucr} using their original train and test splits. Table \ref{tab: small_table_details} reports the results from 15 selected datasets where $d_T$ performed well in comparison the other metrics, showing that its implicit model is well adjusted for some non-ideal scenarios. As expected, there is no metric that outperforms the other two in every dataset.  
        In Figure \ref{fig: datasets} we visualize one instance of each class template per dataset. 
        
        \begin{table}[ht!]
    \centering 
    \resizebox{\textwidth}{!}
    {
        \begin{tabular}{lrrrlll}
        \toprule
        Dataset & Train size & Test size & Classes & Euclidean & $d_T$ & DTW \\
        \midrule
        BME & 30 & 150 & 3 & 82.67 & \textbf{99.33} & 88.67 \\
        Chinatown & 20 & 343 & 2 & 92.87 & \textbf{95.86} & 93.34 \\
        BirdChicken & 20 & 20 & 2 & 55.00 & 70.00 & \textbf{75.00} \\
        DiatomSizeReduction & 16 & 306 & 4 & 86.32 & \textbf{93.67} & 90.64 \\
        FreezerRegularTrain & 150 & 2850 & 2 & 80.49 & \textbf{87.89} & 87.30 \\
        FreezerSmallTrain & 28 & 2850 & 2 & 67.58 & \textbf{75.86} & 69.65 \\
        LargeKitchenAppliances & 375 & 375 & 3 & 49.33 & 78.67 & \textbf{80.53} \\
        Lightning2 & 60 & 61 & 2 & 74.84 & \textbf{82.95} & 79.11 \\
        Lightning7 & 70 & 73 & 7 & 52.32 & 63.49 & \textbf{74.59} \\
        StarLightCurves & 1000 & 8236 & 3 & 89.59 & \textbf{93.07} & 91.48 \\
        Symbols & 25 & 995 & 6 & 89.71 & \textbf{95.06} & 94.89 \\
        Trace & 100 & 100 & 4 & 78.25 & \textbf{100.00} & 99.11 \\
        TwoLeadECG & 23 & 1139 & 2 & 74.73 & \textbf{93.33} & 92.72 \\
        TwoPatterns & 1000 & 4000 & 4 & 90.63 & 97.34 & \textbf{100.00} \\
        UMD & 36 & 144 & 3 & 80.56 & \textbf{95.14} & 84.72 \\
        \bottomrule
        \end{tabular}
    }
    \caption{Euclidean, $d_T$, and DTW metrics evaluated on datasets from the UCR time series classification archive \cite{dau2019ucr}.}
    \label{tab: small_table_details}
\end{table}
        \begin{figure}[ht!]
            \centering
            \includegraphics[width=0.78\linewidth]{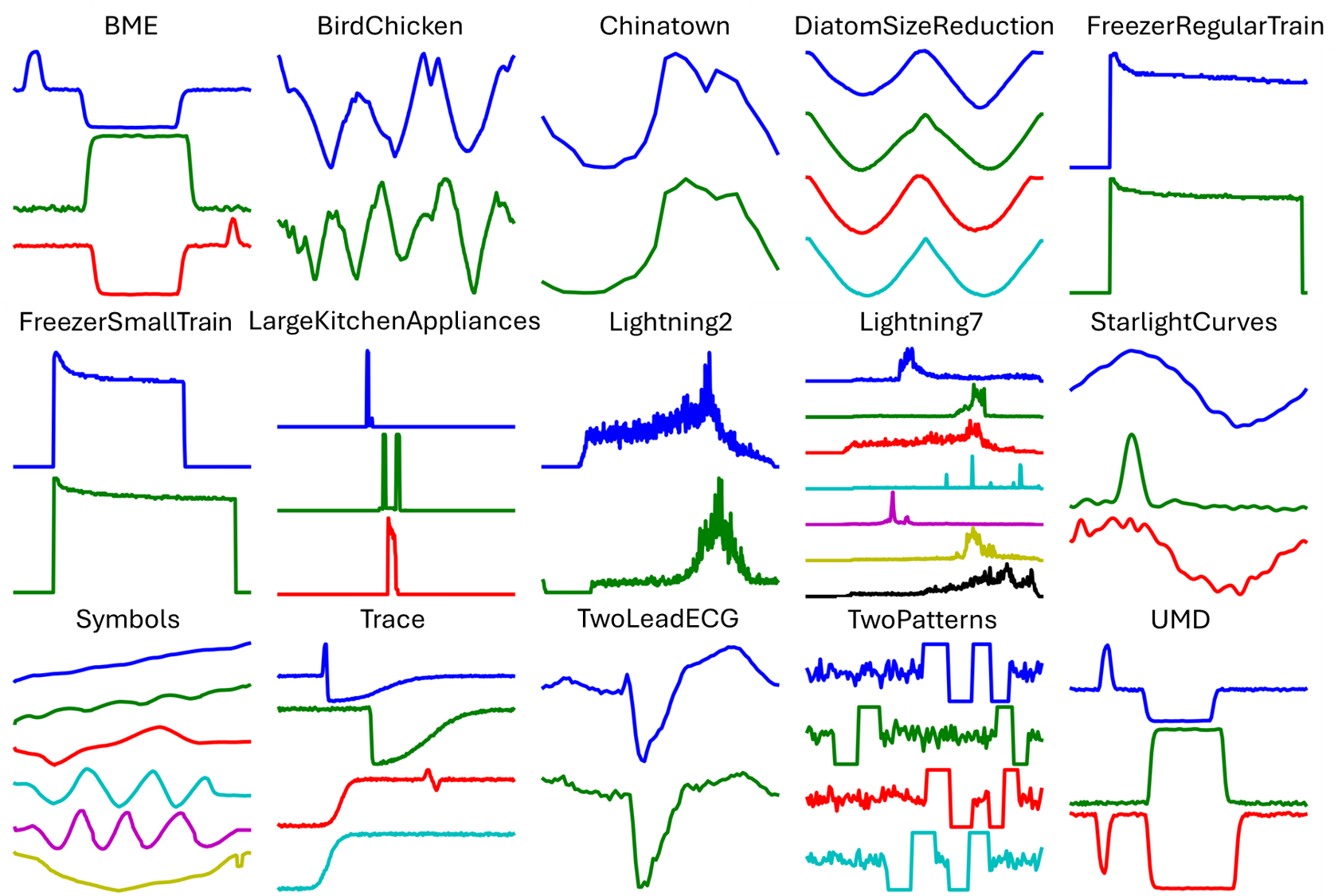}
            \caption{Examples of signals within some datasets from the UCR time series classification archive \cite{dau2019ucr}. Each color represents a sample from a different class. }
            \label{fig: datasets} 
        \end{figure}

        \noindent\textbf{Experiment 2:}  In Figure \ref{fig: correlation}, we demonstrate that the equivalence between the class models of $d_T$ and DTW remains consistent even for non-ideal datasets. The accuracy of both metrics shows a linear correlation in  128 datasets in the UCR archive with a Pearson's correlation coefficient $\rho := 0.87$. Whereas it is not true that $d_T$ outperforms $DTW$ in each dataset, when it does perform similarly, the main advantage is that it is very fast to compute.\\

         \begin{figure}[ht!]
            \centering
            \includegraphics[width=0.65\linewidth]{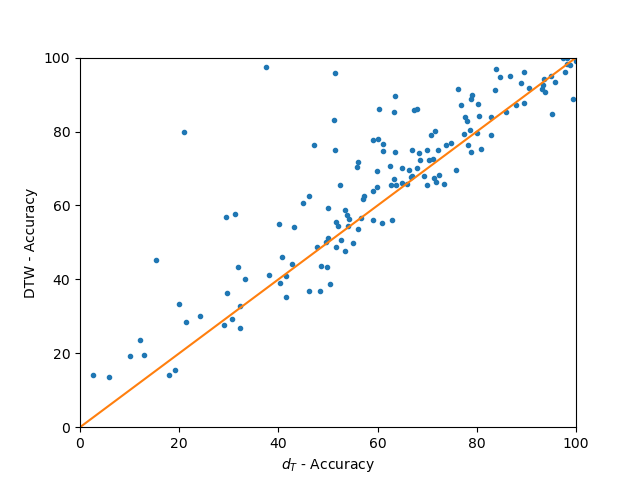}
            \caption{Each point of the plot corresponds to a single dataset for which a 1NN-Classification algorithm has been run using $d_T$ divergence and DTW divergence. The coordinates of each point are the respective accuracies for $d_T$ and DTW. }
            \label{fig: correlation} 
        \end{figure}

        \noindent\textbf{Experiment 3:} 
        For the datasets in Table \ref{tab: small_table_details}, we provide a more thorough analysis of the behavior of the metrics in low-sample regimes. We run the same methods using $1, 2, 4$ and $8$ training samples from each class. This experiment is repeated 30 times, and we report the average accuracies as a function of the number of training samples per class in Figure \ref{fig: low_regime}.

        The conclusions from this experiment are twofold. First, we demonstrate that the implicit class models, DTW and $d_T$, capture the geometry of the classes more effectively than the Euclidean metric. Even with only a few samples per class, we observe high classification accuracies and a rapid convergence toward the asymptotic performance observed with the original train-test splits. Second, we find that the speed advantage of $d_T$ over DTW is consistent: the correlation between the two metrics remains stable regardless of the number of training samples.

        \begin{figure}[ht!]
            \centering
            \includegraphics[width=\linewidth]{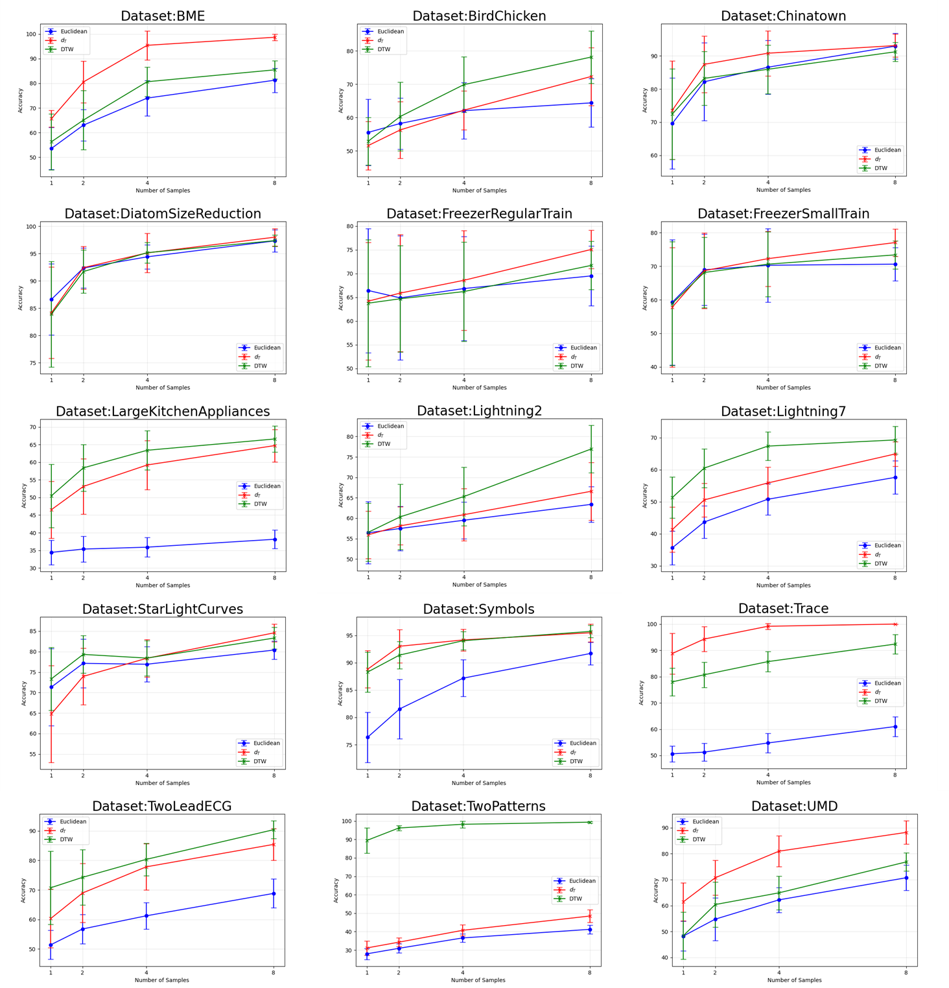}
            \caption{Classification accuracy of nearest neighbor classifiers using different distances.}
            \label{fig: low_regime} 
        \end{figure}

\section{Formal Statements and Proofs}\label{sec: proofs}

\subsection{Proposed Model Class}
\label{app: model class}

    \begin{definition}\label{def: finite flats}
        We say that a function $g$ has \textit{finite flat regions} if the set of zeros of $g'$, $\text{zeros}(g'):= \{x\in Dom(g) : \  g'(x) = 0\}$, has a finite number of connected components. We call each connected component of the set $\text{zeros}(g')$ a flat region of $g$.
    \end{definition}
    Considering functions $g$ under Definition \ref{def: finite flats} allows us to avoid pathological cases, like the Devil's staircase function, whose derivative is zero almost everywhere, but the function is not constant.
    \begin{definition}\label{def: C1 piecewise}
        We say that a function $g:[0,1] \to \mathbb{R}$ is piecewise--$C^1$ if 
        \begin{enumerate}
            \item $g$ is continuous on $[0,1]$.
            \item $g$ is differentiable except at a finite number of points $\{t_1,t_2,\dots,t_n\}$.
            \item $g'$ is continuous in each subinterval $(t_i,t_{i+1})$.
        \end{enumerate}
    \end{definition}
    
    \begin{definition}\label{def: C1_FF}
        We define the set $C^1_{FF}([0,1])$ as the set of all functions $g:[0,1]\to \mathbb{R}$  satisfying Definitions \ref{def: finite flats} and \ref{def: C1 piecewise} simultaneously.
    \end{definition}

    \begin{definition}\label{def: G}
        We define the set of functions $\mathcal{G}$ as 
        \begin{align*}
            \mathcal{G} = \bigg\{& g \in C^1_{FF}([0,1])\mid \quad g \text{ is non-decreasing,} \text{ with } g(0)=0 \text{ and } g(1) = 1 \bigg\}.
        \end{align*}
    \end{definition}
    The definition of $\mathcal{G}$ is inspired by providing a continuous version of the functions $\overline{g}_1, \overline{g}_2$ in the classical definition of the DTW method (see Definition \ref{def: DTW}). 

    \begin{remark}
        If $g\in \mathcal{G}$ is such that it is strictly increasing, then it is invertible, and $g^{-1}\in \mathcal{G}$.
        In general, given $g\in \mathcal{G}$, then 
        \begin{equation}\label{eq: g g dagger = id}
         g\circ g^\dagger = Id   .
        \end{equation}
        This holds from the fact that $g^\dagger$ is strictly right-increasing and the Proposition \ref{prop: properties_gen_inv} on generalized inverses in Appendix \ref{app: gen inv}. However, $g^\dagger$ does not necessarily belongs to $\mathcal{G}$ (for example, $g^\dagger$ will have jumps if $g$ has flat regions).
    \end{remark}

    \begin{definition}\label{def: hyp H}
        For simplicity, we will define hypothesis \textbf{H} as 
        \begin{equation}\label{hip; H}
            s,\phi \in C^1_{FF}([0,1]), 
            \text{ and } \mathcal{G} \text{ is as in Definition \ref{def: G}}. 
        \end{equation}
        This hypothesis is assumed for every result in this article.        
    \end{definition}

\subsection{Proposed Model Class: Equivalence Relation}\label{app: Equivalence relation}

    \begin{proof}[Proof of Proposition \ref{prop: equivalence relation} under hypothesis \textbf{H} (Definition \ref{def: hyp H})]\,
    
    \noindent    We need to verify reflexivity, symmetry and transitivity of the relation
        $s\sim \phi$ if and only if  $s\circ g_1 = \phi \circ g_2$ for some $g_1,g_2\in \mathcal{G}.$

        \textit{Reflexivity and Symmetry:} $s\sim s$ because the identity map belongs to $\mathcal{G}$. Also, if $s\sim \phi$, then there exist $g_1,g_2\in \mathcal{G}$ such that $s\circ g_1=\phi\circ g_2$. Thus, trivially it holds that $\phi\sim s$.
        
       \textit{Transitivity:} If $s\sim \phi$ and $\phi\sim r$, there exists functions $g_1,g_2,f_1,f_2\in \mathcal{G}$ such that $s\circ g_1=\phi\circ g_2$ and $\phi\circ f_1=r\circ f_2$. Let us show that there exists functions $h_1,h_2\in\mathcal{G}$ such that 
        \begin{equation}\label{eq: aux comp}
            g_2\circ h_1=f_1\circ h_2.
        \end{equation}
        The simplest case is when $f_1$ (or, equivalently $g_2$) is strictly increasing and so invertible: We can consider $h_1(t)=t$, $h_2(t)=f_1^{-1}(g_2(t))$.

        In Figure \ref{fig: equivalence_parametrization} we provide a representation of a procedure to find the matching functions $h_1$, $h_2$ in general. For simplicity,  in Figure \ref{fig: equivalence_parametrization} we illustrate $g_2$ and $f_1$ as piecewise linear functions, though this is not necessary. The procedure is to start traversing the domain of $g_2$ linearly with $h_1(t) = t$ and take $h_2(t) = f_1^{-1}\circ g_2(h_1(t))$ while it is possible. If we reach a flat region of $f_1$ we need to switch and traverse the domain of $f_1$ linearly while we keep $h_1$ constant. The actual formulas for $h_1$, $h_2$ are provided below. We need to define a few things before that. 
        
        Let $r_i \in Dom(f_1)$ be the initial point of the $i$-th constant interval of $f_1$. Let $s_i \in Dom(g_2)$ be the first value where $g_2(s_i) = f_1(r_i)$. Let $\Delta_i$ be the length of the $i$-th flat region of $f_1$.

                Let $t_i := s_i + \sum_{j<i} \Delta_j$ and $N(t) := \sum_{i: \, t_i<t} \Delta_i$ (where $t_1=s_1$ and $N(t_1)=0$). We define:
        \begin{align*}
            &h_1(t) = 
            \begin{cases}
                s_i \qquad &\text{ if } t\in [t_i,t_i+\Delta_i)\\
                t - N(t) &\text{ otherwise}
            \end{cases}\\ 
            &h_2(t) = 
            \begin{cases}
                t - t_i +r_i \qquad &\text{ if } t\in [t_i,t_i+\Delta_i)\\
                f_1^\dagger( g_2(h_1(t)))  &\text{ otherwise}
            \end{cases}
        \end{align*}  
        See Figure \ref{fig: equivalence_parametrization}. 
        \begin{figure}[ht!]
            \centering
            \includegraphics[width=\linewidth]{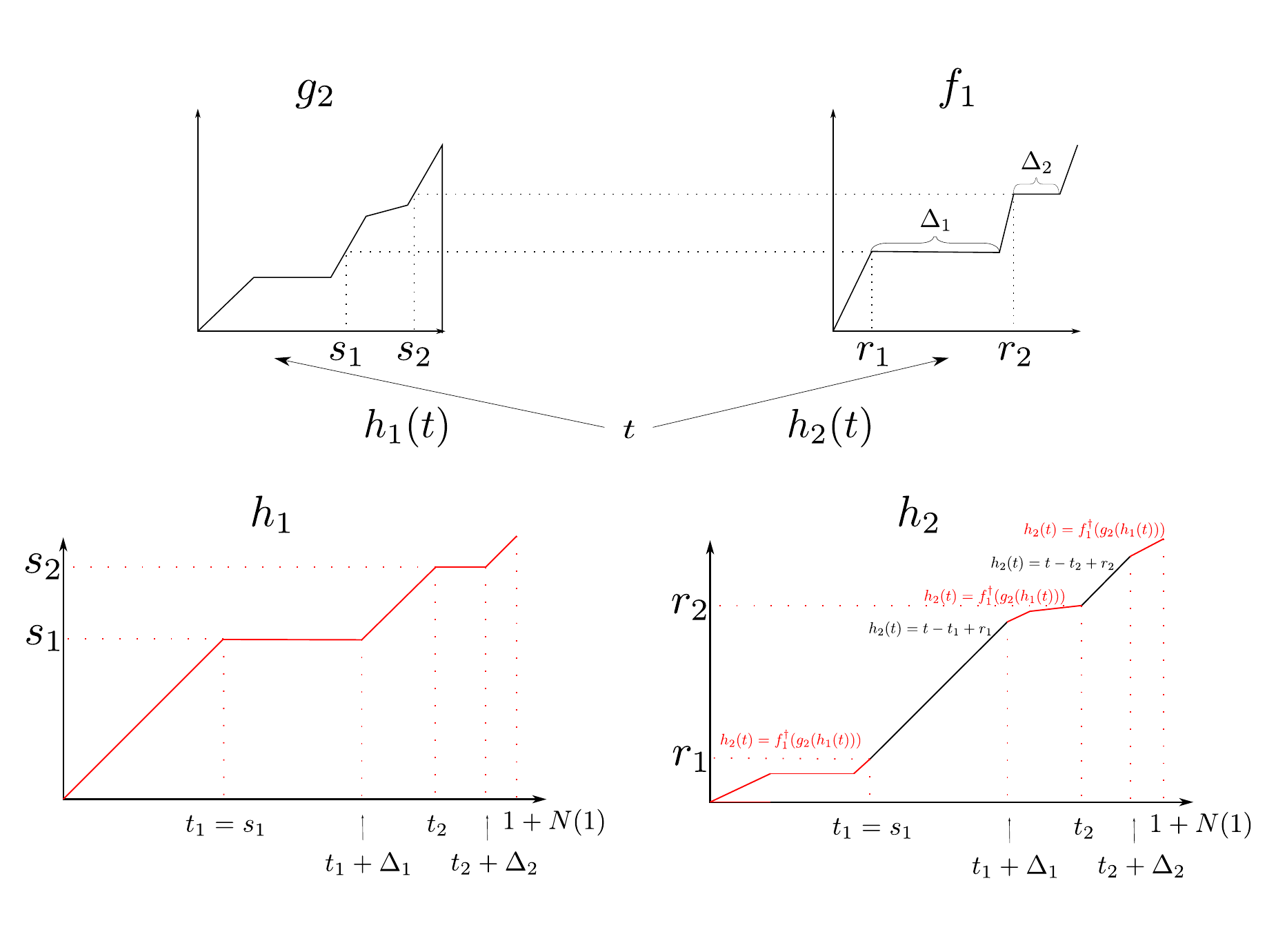}
            \caption{Depicture of the functions $h_1$, $h_2$ (before normalization) used for proving the transitivity property. Let $t$ represent time. We move along the graph of $g_2$ as time progresses according to the rule $h_1(t)$. Respectively, we traverse the graph of $f_1$ over time according to $h_2(t)$.  We start at $t=0$ and as $t$ goes on, we end at the same time. The functions $h_1,h_2$ must be chosen to be non-decreasing, of class $C^1_{FF}$, and so that the value of $g_2$ at time $h_1(t)$ coincides with the value of $f_1$ at time $h_2(t)$. To achieve this, we identify the starting points $r_i$ of the flat regions in the graph of $f_1$, as well as their lengths ($\Delta_i$), and the corresponding points in the domain of $g_2$ that match the same height achieved by $f_1$ at $r_i$. These are which are finitely many, as $f_1\in \mathcal{G}$. } 
            \label{fig: equivalence_parametrization} 
        \end{figure}
        
        Let us check that \eqref{eq: aux comp} holds true for the above choice of $h_1$ and $h_2$:
        If $t\in [t_i,t_i+\Delta_i)$ for some $i$, then $t-t_i+r_i\in [r_i,r_i+\Delta_i)$, and since 
        $f_1(r_i)=f_1(u)$ $\forall u\in [r_i,r_i+\Delta_i]$, we have
        \begin{equation}
         g_2(h_1(t))=g_2(s_i)=f_1(r_i)=f_1(t-t_i+r_i)=f_1(h_2(t)).  
        \end{equation}
        If $t\not\in [t_i,t_i+\Delta_i)$ for any $i$, then, by using \eqref{eq: g g dagger = id},
        \begin{equation*}
            f_1(h_2(t))=f_1(f_1^\dagger( g_2(h_1(t))))=g_2(h_1(t)).
        \end{equation*}

         To end the proof we will show that $h_1$ and $h_2$ have the desire properties. Let us first check that $h_1$ is continuous. It is a piece-wise linear function, and we only have to show continuity at the points of the form $t_i$, $t_i+\Delta_i$. On the one hand, 
        \begin{align*}
            &\lim_{t\to t_1^-}h_1(t)=t_1-N(t_1)=t_1-0=s_1=h_1(t_1) \\
            &\lim_{t\to t_2^-}h_1(t)=t_2-N(t_2)=t_2-\Delta_1=s_2=h_1(t_2)
        \end{align*}
        and, in general,
        \begin{equation*}
            \lim_{t\to t_i^-}h_1(t)=t_i-N(t_i)=t_i-\sum_{j<i}\Delta_j=s_i=h_1(t_i).
        \end{equation*}
        On the other hand,
        \begin{align*}
            \lim_{t\to (t_i+\Delta_i)^-}h_1(t)&=s_i =t_i-\sum_{j<i}\Delta_j=t_i-\Delta_i-\sum_{j<i+1}\Delta_j\\
            &=t_i+\Delta_i-N(t_i+\Delta_i) = h_1 (t_i+\Delta_i).
        \end{align*}
        Now, let us check that $h_2$ is continuous: 
        We only need to show continuity at the points of the form $t_i$, $t_i+\Delta_i$. Indeed, $f_1^\dagger\circ g_2\circ h_1$ is continuous on $(t_i+\Delta_i,t_{i+1})$ because outside the intervals $(t_i,t_{i}+\Delta_i)$ we have $f_1^\dagger=f_1^{-1}$ and so it is continuous, $g_2$ is continuous since it belongs to the set $\mathcal{G}$ and we have already proven that $h_1$ is continuous.  
        On the one hand,
        \begin{align}
            \lim_{t\to t_i^-}h_2(t)&=\lim_{t\to t_i^-}f_1^\dagger (g_2(h_1(t)))=\lim_{t\to t_i^-}f_1^\dagger (g_2(t-N(t))). \label{eq: lim h_2}
        \end{align}   
        Since  $g_2(t-N(t))<g_2(s_i)$ for $t<t_i$, by definition of the generalized inverse we obtain
        \begin{align*}
            f_1^\dagger (g_2(t-N(t)))&=\inf\{z\mid \, f(z)>g\}\leq \inf \{z\mid \, f_1(z)=g_2(s_i)\}=r_i.
        \end{align*}
        Thus, from \eqref{eq: lim h_2}
        \begin{equation*}
            \lim_{t\to t_i^-}h_2(t)\leq r_i.
        \end{equation*}
        By contradiction, suppose that
        \begin{equation}\label{eq: h_2 contradic 1}
            \lim_{t\to t_i^-}h_2(t)< r_i.
        \end{equation}
        Applying on both sides the function $f_1$, which is continuous, using the identity \eqref{eq: g g dagger = id}, and the continuity of $g_2$, we have 
        \begin{align*}
            f_1(r_i)&\geq f_1\left(\lim_{t\to t_i^-}h_2(t)\right)\\
            &=\lim_{t\to t_i^-}f_1(f_1^\dagger (g_2(t-N(t))))\\
            &=\lim_{t\to t_i^-}g_2(t-N(t))\\
            &=g_2(s_i)=f_1(r_i).
        \end{align*}
        That is,
        \begin{equation}\label{eq: h_2 contradic 2}
            f_1(r_i)= f_1\left(\lim_{t\to t_i^-}h_2(t)\right).
        \end{equation}
        Thus, from $\eqref{eq: h_2 contradic 1}$ and \eqref{eq: h_2 contradic 2} we get a contradiction since by definition of $r_1$ is the minimum value where $f_1$ achieves its $i$-th flat region. Therefore, 
        $$\lim_{t\to t_i^-}h_2(t)=r_i=t_i-t_i+r_i=h_2(t_i).$$
        On the other hand, 
        \begin{align}
            \lim_{t\to (t_i+\Delta_i)^-}h_2(t)&=t_i-\Delta_i+t_i+r_i=r_i+\Delta_i \label{eq: h2 aux 11}
        \end{align}
        and, by definition of the generalized inverse,
        \begin{align}
            h_2(t_i+\Delta_i)&=f_1^\dagger(g_2(h_1(t_i+\Delta_i))) \notag\\
            &=f_1^\dagger(g_2(s_i)) \notag \\
            &=\inf\{z\mid \, f_1(z)>g_2(s_i)\} \notag \\
            &=\sup\{z\mid \, f_1(z)=g_2(s_i)\} \notag \\
            &=r_i+\Delta_i \label{eq: h2 aux 22}
        \end{align}
        where the previous to last equality follows from the fact that $f_1$ is a continuous function, and the last equality follows because $f_1$ is constant on the interval $[r_i,r_i+\Delta_i]$. Therefore, from \eqref{eq: h2 aux 11} and \eqref{eq: h2 aux 22} it follows that $h_2$ is continuous at the points of the form $t_i+\Delta_i$. 

        Finally, notice that $h_1,h_2$ have domain $[0,1+N(1)]$. But as long as they satisfy $g_2 \circ h_1 = f_1\circ h_2$ this will pose no problem since we can rescale them to be defined on $[0,1]$ by taking $ h_i(t(1+N(1))$, $i=1,2$.
        After rescaling, $h_1,h_2\in \mathcal{G}$ since they will satisfy $h_1(0)=0=h_2(0)$ and $h_1(1)=1=h_2(1)$ and, by construction, they have finitely many plateaus and are piece-wise continuously differentiable. See Figure \ref{fig: equivalence_parametrization}.
    \end{proof}

    \begin{remark}\label{rmk: parametrization}
        Let $NF = \left(\bigcup[t_i,t_i+\Delta_i)\right)^c$. From the definition of $h_1$ and Figure \ref{fig: equivalence_parametrization} it is easy to notice that $$Im(h_1) = Im({h_1}|_{NF}) = Dom(g_2) = Dom(f_1^\dagger\circ g_2).$$ 
        Now, using the definition of $h_2$ we obtain that for every $u\in Dom(f_1^\dagger\circ g_2)$, there exists a $t$ such that $h_1(t) = u$ and $h_2(t) = f_1^\dagger\circ g_2(h_1(t)) = f_1^\dagger\circ g_2(u) $.
    \end{remark}
    
\subsection{The CDTW divergence}\label{app: CDTW}

\begin{proof}[Proof of Theorem \ref{thm: cdtw class} under hypothesis \textbf{H} (Definition \ref{def: hyp H})]\,

    Let $s:[0,1]\to \mathbb R$, $\phi:[0,1]\to \mathbb R$ in $C^1_{FF}([0,1])$.
     Let $\gamma^*(t)=(g_1(t),g_2(t))$ be 
     such that          $$0=\int_0^1
        |s(g_1(t))-\phi(g_2(t))|\ |g_1'(t)+g_2'(t)|dt.$$
        Recall that $|g_1'(t)+g_2'(t)|=g_1'(t)+g_2'(t)$ since $g_1',g_2'\geq0$.

        The set of points where $(g_1+g_2)'$ does not exists is finite. Let denote such points $0\leq t_1<\dots<t_{n-1}\leq 1 $ and let $t_0=0, t_n=1$.
        Given $0\leq j\leq n-1$, consider the interval $(t_j,t_{j+1})$ where $g_1+g_2$ is of class $C^1$.
        Since $g_1,g_2\in \mathcal{G}$, the set
        $$Z_j:=\{t\in (t_j,t_{j+1}): \, g_1'(t)=0 \text{ and } g_2'(t)=0\}$$
         has a finite number of connected components. Let us denote them as  $C_1^j,\dots,C_m^j$. Recall that each connected component $C_i^j$ is an interval on $\R$ which could be the degenerate, i.e.,  a singleton.

        Consider an arbitrary point  $t_0\in (t_j,t_{j+1})$. 

        If $t_0\not\in Z_j$, then  $g_1'(t_0)+g_2'(t_0)>0$. By continuity, there exists  $\varepsilon>0$ such that
        $g_1'(t)+g_2'(t)>0$ for all $t\in(t_0-\varepsilon,t_0+\varepsilon)$. Thus, it follows that 
        $s(g_1(t))=\phi(g_2(t))$ for all $t\in(t_0-\varepsilon,t_0+\varepsilon)$.

        If $t_0\in Z_j$, let $C_i^j$ be the connected component containing $t_0$.

        Suppose $C_i^j$ is a singleton, that is, $C_i^j=\{t_0\}$. Then there exists $\varepsilon>0$ such that $(g_1+g_2)'>0$ on $(t_0,t_0+\varepsilon)$ and on $(t_0-\varepsilon,t_0)$. By using this and  also the continuity of the functions $s\circ g_1$ and $\phi\circ g_2$, we have that $s(g_1(t))=\phi(g_2(t))$ for all $t\in(t_0-\varepsilon,t_0+\varepsilon)$.

        Thus, so far we have proven that $s\circ g_1-\phi\circ g_2$ is the constant function zero outside the non-degenerate connected components of $Z_j$.

        Suppose $C_i^j$ is a non-degenerate interval. Then, since $g_1'\equiv 0$ and $g_2'\equiv 0$ on $C_i^j$, we have that $g_1$ and $g_2$ are constant on $C_i^j$. So, $s(g_1(t))-\phi(g_2(t))$ is also constant on $C_i^j$.

        Therefore, we have proven that on $(t_j,t_{j+1})$ the function $s\circ g_1-\phi\circ g_2$ is a continuous functions whose range is discrete. Thus, $s\circ g_1-\phi\circ g_2$ is constant on $(t_j,t_{j+1})$, and this holds for every $0\leq j\leq n-1$. 
        Hence, we see that on $[0,1]$ the function $s\circ g_1-\phi\circ g_2$ is a continuous functions whose range is discrete. Thus,  $s\circ g_1-\phi\circ g_2$ is constant $c$ on $[0,1]$.
        Going back to our hypothesis, we have
        \begin{align*}
             0&=\int_0^1
            |s(g_1(t))-\phi(g_2(t))|\ |g_1'(t)+g_2'(t)|dt=c\left(\int_0^1
            |g_1'(t)+g_2'(t)|dt\right)  .
        \end{align*}
        If $c\not=0$, then $\int_0^1
        |g_1'(t)+g_2'(t)|dt=0$ which implies that $g_1'=0=g_2'$ everywhere except for a finite number of points where they are not defined. Since $g_1$ and $g_2$ are continuous functions on $[0,1]$, we have that they are constant on $[0,1]$ But the functions $g_1,g_2$ can not be constant since $g_1(0)=0=g_2(0)$ and $g_1(1)=1=g_2(1)$. Thus, $c=0$.
    \end{proof}
    \begin{remark}
        The finite flat regions hypothesis is not easy to relax. For example, if we would only ask for a countable number of flat regions, then setting $x(t)$ and $y(t)$ as Cantor's functions it can be proven that $CDTW(s,\phi) =0$ for every $x,\phi$. Thus Theorem \ref{thm: cdtw class} is no longer valid. Also piecewise-$C^1$  function could be constructed that do not satisfy the finite flat regions hypothesis.
    \end{remark}
    
\subsection{Relation between CDT and DTW divergences}\label{app: CDTW-DTW}
        
        The proof of Theorem \ref{thm: CDTW-DTW param space} is inspired by the approximation algorithm in Chapter 4 of the Master's Thesis \cite{klaren2020continuous}. For that, we will work with curves $\gamma:[0,1] \to [0,1]\times[0,1]$ and some clarification are needed. We will equip the square $[0,1]\times[0,1]$ with the 1-norm $\|(x,y)\|_1 = |x| + |y|$. For any curve $\gamma:[0,1]\to [0,1]\times[0,1]$ we consider its arc-length respect to this norm as the usual limit 
        \begin{equation*}
            Length(\gamma) = \lim_{M\to \infty} \sum \left\|\gamma\left(\frac{i+1}{M}\right) - \gamma\left(\frac{i}{M}\right)\right\|_1,
        \end{equation*}
        which coincides with the usual formula
        \begin{equation*}
            Length(\gamma) = \int_0^1 \|\gamma'(t)\| dt
        \end{equation*}
        when $\gamma$ is sufficiently smooth. We say that the value $\|\gamma'(t)\|_1$ is the speed of the curve at time $t$.
        It is worth noticing that if the coordinate functions are non-decreasing and $\gamma(0) = (0,0)$, $\gamma(1) =(1,1)$, then $Length(\gamma) = \|(1,1)-(0,0)\|_1 = 2$. We say that $\gamma$ is parametrized by constant speed if there exists $c>0$ such that $\|\gamma'(t)\|_1 = c$ for all differentiable points $t$ of $\gamma$, and this constant $c$ should be equal to its length if the domain is $[0,1]$, i.e., $c=Length(\gamma)$.

    \begin{proof}[Proof of Theorem \ref{thm: CDTW-DTW param space}  under hypothesis \textbf{H} (Definition \ref{def: hyp H})]\, 

   \noindent  Let $s:[0,1]\to \mathbb R$, $\phi:[0,1]\to \mathbb R$ in $C^1_{FF}([0,1])$.
            Let $\{u_j\}_{j=0}^N$ be a uniform partition of $[0,1]$ with step size $\Delta$ to be determined, and consider discretizations $\overline{s}$, $\overline{\phi}$ of $s$, $\phi$ over the partition $\{u_j\}_{j=0}^N$ as Section \ref{sec: dtw - cdtw}.                      
        We define the lattice $\{(u_i,u_j)\}_{i,j =0}^N$ in $[0,1]\times [0,1]$. We say that the square with vertices $\{(u_{i-1},u_{j-1}),(u_{i-1},u_j),(u_i,u_{j-1}),(u_i,u_j)\}$ is the cell $C_{ij}$.
        The diameter of each cell is computed with respect to the norm $\| \cdot\|_1$ as
        \begin{equation*}
        \|(u_{i},u_{j})-(u_{i-1},u_{j-1})\|_1=    |u_j -u_{j-1}| + |u_i - u_{i-1}| =  2\Delta   .
        \end{equation*}

        \textbf{Step 1:} First, let us assume that there exists $\gamma^*:[0,1] \to [0,1]\times [0,1]$ be an optimal continuous warping path for  $CDTW(s,\phi)$. 
        We will see how to relax this assumption later. (See Step 4 at the end of the proof.)

        The path $\gamma^*(t)=(x^*(t),y^*(t))$ visits a finite number of cells $K$, with $N\leq K \leq 2N$, in the partition of $[0,1]\times [0,1]$.
        For every cell $C_{ij}$ that is visited by $\gamma^*$ consider the bottom-left vertex $(u_{i-1},u_{j-1})$. That is, varying $i,j \in \{1,\dots,N\}$, consider the set of vertices
        of the form $(u_{i-1},u_{j-1})\in [0,1]\times[0,1]$ such that there exists $t\in [0,1]$  satisfying $\gamma^*(t) \in C_{ij}$. Add to such set the point $\gamma^*(1)=(1,1)$ (i.e., the top-right vertex of $[0,1]\times [0,1]$).
        Order the points in such set by $\leq_{xy}$ defined by $( x_1, y_1)\leq_{xy}(x_2,  y_2)$ if and only if $ x_1\leq  x_2$ and $ y_1\leq  y_2$, and rename them as
        $\{v_k\}_{k=0}^{K}$. Notice that $v_0=(0,0)$, $v_K=(1, 1)$, and that all the points in this set of vertices are comparable according to the partial order $\leq_{x,y}$ since $x^*,y^*$ are non-decreasing. See Figure \ref{fig: paths}.
        
        For each $1\leq k\leq K$, we define the $k$-th edge $E_k$ as the segment in $[0,1]\times [0,1]$ with extreme points $v_{k-1}$ and $ v_{k}$. Notice that 
        $E_k$ will be either a horizontal or vertical side of a cell or its diagonal.

        We define $\gamma_{e}:[0,1] \to [0,1]\times [0,1]$, $\gamma_e(t) = ({x}_e(t),{y}_e(t))$, as the polygonal curve generated by the sequence  of vertices $\{v_k\}_{k=0}^{K}$ in $[0,1]\times [0,1]$ \textit{parametrized by constant speed} (where it is differentiable) \textit{with respect to} $\|\cdot\|_1$\textit{-norm in} $[0,1]\times [0,1]$ (as stated above). That is,
        \begin{align}\label{eq: len}
            \|\gamma_e'(t)\|_1 &:= |x'_e(t)| + |y'_e(t)| =  2 \quad \forall t\in[0,1], \notag \\
            &\text{ and } \int_0^1\|\gamma_e'(t)\|_1 \, dt=2.
        \end{align}        
        It won't necessarily be optimal for $CDTW(s,\phi)$, but we will use it to ``estimate'' the cost given by $\gamma^*$. 
        For each $0\leq k\leq K$, let us define $t_k\in [0,1]$ as the time such that $\gamma_{e}$ visits a vertex $v_k$. Thus, $\gamma_e(t_k) = ({x}_e(t_k),{y}_e(t_k)) =v_k$. It holds that $\gamma_e$ is an admissible warping path for CDTW optimization problem \eqref{eq: cdtw12} (see Definition \ref{def: CDTW}).

        Let $\gamma_{e}^\perp:[0,1]\to[0,1]\times[0,1]$ be the reparametrization of $\gamma^*$ defined as follows:
        For each time $t\in [0,1]$, the point $\gamma_{e}^\perp(t)$ is given by  the projection of $\gamma_{e}(t)$ onto the set $\gamma^*([0,1])$ along the direction vector $(1,-1)$. It holds that $\gamma_e^\perp$ is an admissible path in the sense of Definition \ref{def: CDTW}. Note that $\gamma^*$ and $\gamma_e^\perp$ are different parametrizations of the same curve. See Figure \ref{fig: paths}.
        
        \begin{figure}
            \centering
            \includegraphics[width=1\linewidth]{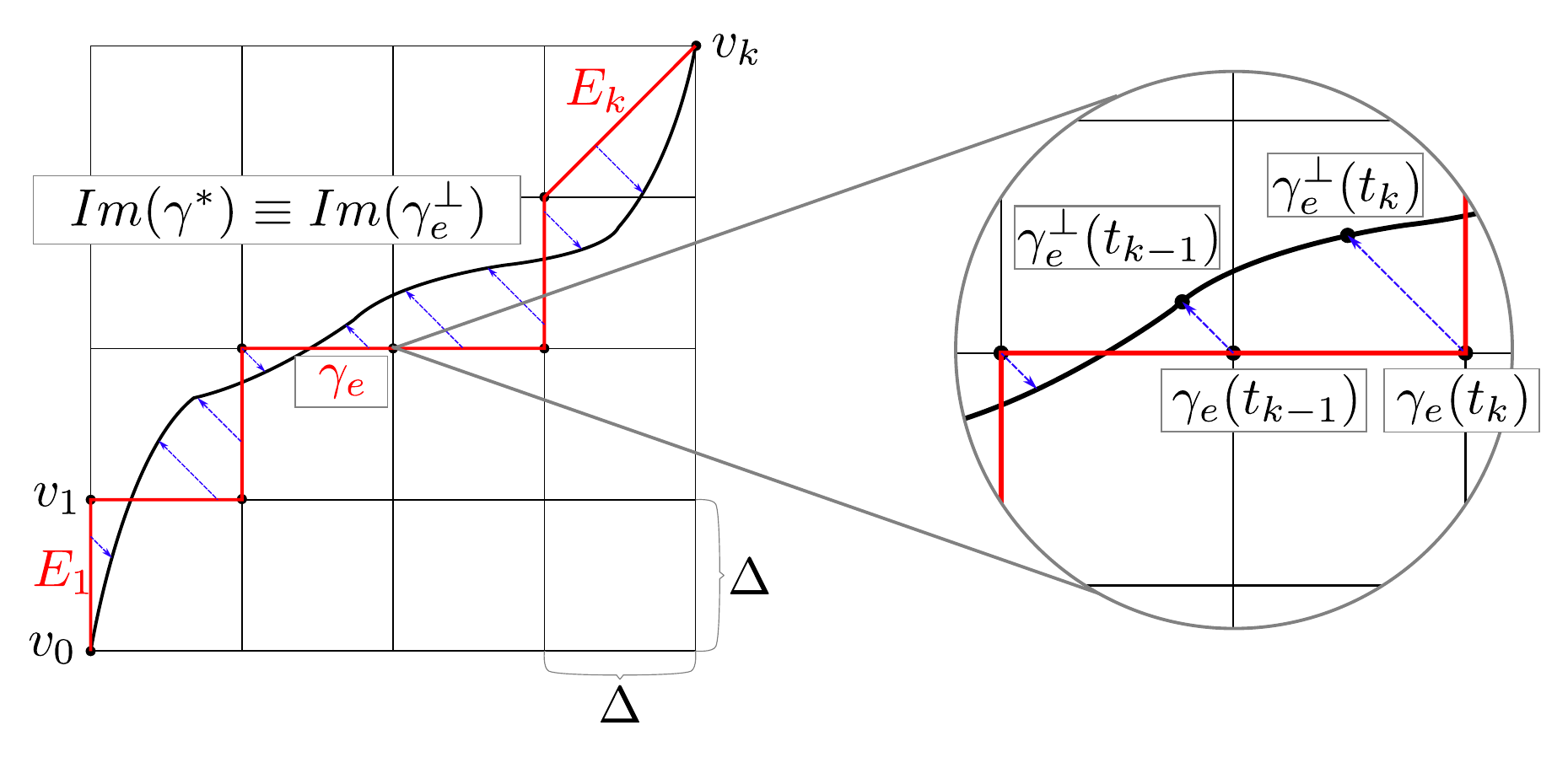}
            \caption{
             Discretization of $[0,1]\times[0,1]$ into cells of size $\Delta\times \Delta$.
            Visualization of the optimal path $\gamma^*$ (black curve), the points $v_0$, $v_1$, $\dots$, $v_K$ obtained as $\gamma^*$ crosses one cell, and its union $\gamma_e$ (red polygonal curve). The curve $\gamma_e^\perp$ is obtained by projecting $\gamma_e$ onto $\gamma^*$ along the line with direction $(1,-1)$ (blue arrows). As a result, the image of $\gamma_e^\perp$ coincides with the image of $\gamma^*$ (black curve). }
            \label{fig: paths}
        \end{figure}

        Now, consider the discretizations $\overline{s}$, $\overline{\phi}$ of $s$, $\phi$ over the partition $\{u_j\}_{j=0}^N$.
        Then, the discrete warping path defined by $$\overline{\gamma}_e (k) := (x(k),y(k)) =(x_e(t_k),y_e(t_k))/\Delta$$ 
        is an admissible path for $DTW_\omega(\overline{s},\overline{\phi})$. Moreover, it satisfies 
        \begin{align} \label{eq: dicrete_cont_h}
            |\overline{s}(x(k))-\overline{\phi}(y(k))| = |s(x_e(t_k))-\phi(y_e(t_k))|.
        \end{align}
        The weight function for $\overline{\gamma}_e$, as defined in \eqref{eq: weight}, takes the form        
        \begin{align} \label{eq: discrete_cont_weights}
            \omega_k&=
                          |x_e({t_k})-x_e(t_{k-1})| + |y_e({t_{k}})-y_e{(t_{k-1})}|,
        \end{align}
        {which can have values $\Delta$ or $2\Delta$.}

        \begin{claim}(See proof below.)\label{claim: weights} For each $1\leq k\leq K$,
                    $\int_{t_{k-1}}^{t_k}\|\gamma_e'(t)\|_1dt=\omega_k$.
        \end{claim}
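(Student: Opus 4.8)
The plan is to prove the identity by reducing the line integral over $[t_{k-1},t_k]$ to the fundamental theorem of calculus, using the monotonicity of the coordinate functions to dispose of the absolute values. The key observation is that on $[t_{k-1},t_k]$ the curve $\gamma_e$ traverses exactly the single edge $E_k$, i.e.\ the straight segment from $v_{k-1}$ to $v_k$; since the vertices were ordered by $\leq_{xy}$ we have $v_{k-1}\leq_{xy}v_k$, so both coordinate functions $x_e,y_e$ are affine and non-decreasing on this subinterval. In particular $x_e'(t)\geq 0$ and $y_e'(t)\geq 0$ wherever they exist, which lets me rewrite the speed as $\|\gamma_e'(t)\|_1=|x_e'(t)|+|y_e'(t)|=x_e'(t)+y_e'(t)$.

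First I would split
$$\int_{t_{k-1}}^{t_k}\|\gamma_e'(t)\|_1\,dt=\int_{t_{k-1}}^{t_k}x_e'(t)\,dt+\int_{t_{k-1}}^{t_k}y_e'(t)\,dt,$$
and apply the fundamental theorem of calculus to each coordinate --- legitimate because each is affine, hence absolutely continuous, on $[t_{k-1},t_k]$ --- to obtain $\bigl(x_e(t_k)-x_e(t_{k-1})\bigr)+\bigl(y_e(t_k)-y_e(t_{k-1})\bigr)$. Finally, using once more that $x_e,y_e$ are non-decreasing, both differences are non-negative, so I may reinsert the absolute values and recover exactly $\omega_k=|x_e(t_k)-x_e(t_{k-1})|+|y_e(t_k)-y_e(t_{k-1})|$ from \eqref{eq: discrete_cont_weights}. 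Note that this argument is independent of the particular parametrization of $\gamma_e$: the integrals depend only on the endpoint values, so the constant-speed choice plays no role in this step.

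Equivalently, the identity can be read off geometrically, since the left-hand side is the $\|\cdot\|_1$-arc length of the segment $E_k$, and for a straight segment whose coordinates do not backtrack this length is just the $\|\cdot\|_1$-distance $\|v_k-v_{k-1}\|_1=\omega_k$. I do not expect a genuine obstacle here; the only points requiring a line of care are confirming that $x_e,y_e$ are truly monotone on each edge (so the absolute values can be removed and then restored) and that the fundamental theorem of calculus applies, both of which are immediate from the piecewise-linear, $\leq_{xy}$-ordered construction of $\gamma_e$.
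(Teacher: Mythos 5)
Your proof is correct and follows essentially the same route as the paper's: drop the absolute values on $x_e'$ and $y_e'$ by monotonicity, apply the fundamental theorem of calculus on each coordinate, and reinsert the absolute values to match the definition of $\omega_k$ in \eqref{eq: discrete_cont_weights}. The only cosmetic difference is that the paper invokes monotonicity of $x_e,y_e$ on all of $[0,1]$ while you localize to the single affine edge $E_k$; the argument is the same.
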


        \begin{claim}(See proof below.)\label{claim: norm1 pi}  
            $\|(\gamma_e^\perp)'(t)\|_1=\|\gamma_e'(t)\|_1$.
        \end{claim}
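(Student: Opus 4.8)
The plan is to exploit the one structural property that the projection direction $(1,-1)$ enjoys: a displacement along $(1,-1)$ leaves the coordinate sum $x+y$ unchanged, since moving a point $(p_1,p_2)$ to $(p_1+\lambda,\,p_2-\lambda)$ preserves $p_1+p_2$. Consequently, the projection of a point $p$ onto $\gamma^*([0,1])$ along $(1,-1)$ is precisely the unique point of $\gamma^*$ lying on the level line $\{x+y = p_1+p_2\}$. This point exists and is unique because $x^*+y^*$ is continuous and non-decreasing along $\gamma^*$ (both coordinate functions are non-decreasing, so their sum is constant only on a rest interval, which collapses to a single point of the plane), runs from $0$ at $\gamma^*(0)=(0,0)$ to $2$ at $\gamma^*(1)=(1,1)$, and hence attains every value in $[0,2]$ exactly once as a point in the plane.

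First I would record the resulting identity at the level of coordinates. Writing $\gamma_e(t)=(x_e(t),y_e(t))$ and $\gamma_e^\perp(t)=(x_e^\perp(t),y_e^\perp(t))$, the invariance above gives, for every $t\in[0,1]$,
\begin{equation*}
x_e^\perp(t)+y_e^\perp(t) = x_e(t)+y_e(t).
\end{equation*}
In other words, $\gamma_e^\perp$ traverses $\gamma^*$ by matching the coordinate sum of $\gamma_e$ at each instant. Next I would observe that both curves have non-decreasing coordinate functions: for $\gamma_e$ this is built into its construction through the $\leq_{xy}$-ordered vertices $\{v_k\}$, while for $\gamma_e^\perp$ it follows because $x_e(t)+y_e(t)$ is non-decreasing in $t$ (again by the monotone construction of $\gamma_e$), so $\gamma_e^\perp$ advances forward along $\gamma^*$ and thereby inherits the non-decreasing monotonicity of $x^*$ and $y^*$. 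Hence at every point of differentiability we may drop the absolute values and write $\|\gamma_e'(t)\|_1 = x_e'(t)+y_e'(t)$ and $\|(\gamma_e^\perp)'(t)\|_1 = (x_e^\perp)'(t)+(y_e^\perp)'(t)$.

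To finish, I would differentiate the coordinate-sum identity, obtaining $(x_e^\perp)'(t)+(y_e^\perp)'(t) = x_e'(t)+y_e'(t)$ at every common point of differentiability, and combine this with the two speed expressions just derived to conclude $\|(\gamma_e^\perp)'(t)\|_1 = \|\gamma_e'(t)\|_1$. The main obstacle I anticipate is not the differential identity itself but the bookkeeping around regularity: verifying that $\gamma_e^\perp$ is well-defined (a unique projection) and that, like $\gamma_e$, it is piecewise-$C^1$ with non-decreasing coordinates, so that the passage from $\|\cdot\|_1$ to the sum of derivatives is legitimate and the differentiation is valid off a finite set of times. Since $\gamma_e$ is piecewise linear and $\gamma^*$ is an admissible (hence piecewise-$C^1$, non-decreasing) path, these checks are routine, and the equality of speeds then holds at all but finitely many $t$, which is exactly what the subsequent length and cost estimates require.
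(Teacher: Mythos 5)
Your proposal is correct and follows essentially the same route as the paper's proof: both rest on the observation that projection along $(1,-1)$ preserves the coordinate sum (i.e., the $\|\cdot\|_1$-norm in the first quadrant), use monotonicity of the coordinate functions to write each speed as the derivative of that sum, and then differentiate the resulting identity. The only cosmetic difference is that the paper equates the two arc-length integrals $\int_0^t\|\gamma_e'\|_1$ and $\int_0^t\|(\gamma_e^\perp)'\|_1$ before differentiating, whereas you differentiate the pointwise identity $x_e^\perp(t)+y_e^\perp(t)=x_e(t)+y_e(t)$ directly, which is the same argument via the fundamental theorem of calculus.
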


        Using Claims \ref{claim: weights} and \ref{claim: norm1 pi}, and the fact that $\gamma_e^\perp$  is a reparametrization of $\gamma^*$, we compute 
        \begin{align}
            CDTW(s,\phi)&=\int_0^1 |s(x^*(t))-\phi(y^*(t))| \|(\gamma^*)'(t)\|_1 dt
            \notag \\
            &=\int_0^1 |s(x_e^\perp(t))-\phi(y_e^\perp(t))|\|(\gamma_e^\perp)'(t)\|_1 dt\notag%
            \\
            &=\sum_{k=1}^K\int_{t_{k-1}}^{t_k} |s(x_e^\perp(t))-\phi(y_e^\perp(t))|\|(\gamma_e^\perp)'(t)\|_1 dt \notag\\
            &=\sum_{k=1}^K\int_{t_{k-1}}^{t_k} |s(x_e^\perp(t))-\phi(y_e^\perp(t))|\|\gamma_e'(t)\|_1 dt. \label{eq: aux cdtw}
        \end{align}    
        Using \eqref{eq: dicrete_cont_h}, \eqref{eq: discrete_cont_weights}, and Claim \ref{claim: weights}, we obtain
        \begin{align}
            DTW_\omega(\overline{s},\overline{\phi})&\leq \sum_{k=1}^K  |s(x_e(t_k))-\phi(y_e(t_k))|\omega_k \notag\\
            &=\sum_{k=1}^K |s(x_e(t_k))-\phi(y_e(t_k))| \int_{t_{k-1}}^{t_k} \|\gamma_e'(t)\|_1 dt \notag\\ 
            &=\sum_{k=1}^K\int_{t_{k-1}}^{t_k}|s(x_e(t_k))-\phi(y_e(t_k))| \|\gamma_e'(t)\|_1 dt. \label{eq: aux dtw}
        \end{align}

        \begin{claim}\label{claim: h_tilde_bound}(See proof below.)
            Consider the cost function $h(\gamma(t))$
            as in the Definition \ref{def: CDTW}.
            For all $t\in (t_{k-1},t_k)$, we have $\|(\gamma_e^\perp)(t)-\gamma_e(t_k)\|_1\leq 3{\Delta}.$
            As a consequence, for all $t\in (t_{k-1},t_k)$, we have      $\left|h(\gamma_e^\perp(t))- h(\gamma_e(t_k))\right|\leq \max\{\|s'\|_\infty,\|\phi'\|_\infty\} \cdot 3\Delta$. 
        \end{claim}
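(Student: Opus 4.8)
The plan is to treat the two assertions of Claim \ref{claim: h_tilde_bound} in turn: first the purely geometric bound $\|\gamma_e^\perp(t)-\gamma_e(t_k)\|_1\le 3\Delta$, and then to deduce the cost estimate from the Lipschitz continuity of $s$ and $\phi$.

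For the geometric bound, I would fix $t\in(t_{k-1},t_k)$, recall that $\gamma_e(t_k)=v_k$, and split via the triangle inequality
\[
\|\gamma_e^\perp(t)-v_k\|_1\le \|\gamma_e^\perp(t)-\gamma_e(t)\|_1+\|\gamma_e(t)-v_k\|_1 .
\]
The second term is controlled by the edge size: for $t\in(t_{k-1},t_k)$ the point $\gamma_e(t)$ lies on the segment $E_k$ joining $v_{k-1}$ to $v_k$, which is either an axis-parallel side of a cell or its diagonal, so $\|\gamma_e(t)-v_k\|_1\le \|v_{k-1}-v_k\|_1\le 2\Delta$. The first term is the displacement produced by the anti-diagonal projection: since $\gamma_e^\perp(t)-\gamma_e(t)$ points in the direction $(1,-1)$, it equals $\lambda(1,-1)$ for some scalar $\lambda$, and $\|\gamma_e^\perp(t)-\gamma_e(t)\|_1=2|\lambda|$. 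Here I would use that, by construction, $\gamma^*$ visits the cell whose bottom-left vertex is $v_{k-1}$ and that $E_k$ lies inside that cell; consequently the anti-diagonal through $\gamma_e(t)$ meets $\gamma^*$ at a point whose coordinates differ from those of $\gamma_e(t)$ by at most one cell-width, giving $\|\gamma_e^\perp(t)-\gamma_e(t)\|_1\le \Delta$. Adding the two bounds yields $3\Delta$.

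With the geometric bound in hand, the cost estimate is routine. Since $s,\phi\in C^1_{FF}([0,1])$ are piecewise-$C^1$ on the compact interval $[0,1]$, they are Lipschitz with constants $\|s'\|_\infty$ and $\|\phi'\|_\infty$. Writing $\gamma_e^\perp(t)=(x_e^\perp(t),y_e^\perp(t))$ and $v_k=(x_e(t_k),y_e(t_k))$, and using $h(\gamma(t))=|s(g_1(t))-\phi(g_2(t))|$ together with the reverse triangle inequality $\big||a|-|b|\big|\le|a-b|$, I would estimate
\[
\big|h(\gamma_e^\perp(t))-h(\gamma_e(t_k))\big|\le |s(x_e^\perp(t))-s(x_e(t_k))|+|\phi(y_e^\perp(t))-\phi(y_e(t_k))|.
\]
Bounding each difference by the Lipschitz constant times the coordinate gap and factoring out $\max\{\|s'\|_\infty,\|\phi'\|_\infty\}$ turns the right-hand side into $\max\{\|s'\|_\infty,\|\phi'\|_\infty\}\,\|\gamma_e^\perp(t)-v_k\|_1$, which is at most $\max\{\|s'\|_\infty,\|\phi'\|_\infty\}\cdot 3\Delta$ by the first part.

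The main obstacle is the anti-diagonal projection estimate $\|\gamma_e^\perp(t)-\gamma_e(t)\|_1\le\Delta$: one must verify that projecting a point of the polygonal curve $\gamma_e$ onto $\gamma^*$ along $(1,-1)$ cannot move it beyond the (at most one) neighbouring cell. This requires bookkeeping of exactly which cells $\gamma^*$ visits and a short case analysis according to whether $E_k$ is horizontal, vertical, or diagonal; the delicate point is ruling out (or absorbing into the constant) the possibility that the anti-diagonal leaves the cell containing $E_k$ before it meets $\gamma^*$. Everything else—the edge-length bound and the Lipschitz estimate—is elementary.
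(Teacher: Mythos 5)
Your Lipschitz half of the argument is exactly the paper's computation and is correct. The genuine gap is in the geometric half: the projection estimate $\|\gamma_e^\perp(t)-\gamma_e(t)\|_1\le\Delta$ is false. What monotonicity of $\gamma^*$, together with the fact that $\gamma^*$ meets $\overline{C}$ (the cell with bottom-left vertex $v_{k-1}$ that contains $E_k$), actually gives is that \emph{each coordinate} of the displacement moves by at most $\Delta$; since the displacement is of the form $\lambda(1,-1)$, this yields only $\|\gamma_e^\perp(t)-\gamma_e(t)\|_1=2|\lambda|\le 2\Delta$, and this is sharp. Concretely, let $C=[a,a+\Delta]\times[b,b+\Delta]$, and let $\gamma^*$ climb almost vertically through the cell below $C$ keeping $x\approx a+\Delta$, enter $C$ at $(a+\Delta-\varepsilon,b)$, and exit through the right edge at $(a+\Delta,b+\varepsilon)$; then $v_{k-1}=(a,b)$, $v_k=(a+\Delta,b)$, and $E_k$ is the bottom edge of $C$. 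For $\gamma_e(t)=(a+\delta,b)$ with $\delta,\varepsilon$ small, the anti-diagonal $\{x+y=a+b+\delta\}$ meets $\gamma^*$ only in the cell \emph{below} $C$, at a point $p\approx(a+\Delta,\,b-\Delta+\delta)$, so $\|p-\gamma_e(t)\|_1\approx 2(\Delta-\delta)$, arbitrarily close to $2\Delta$. This is precisely the scenario you flagged as the ``delicate point'' (the anti-diagonal leaving the cell before hitting $\gamma^*$), and it cannot be ruled out.

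The failure is also not repairable within your decomposition: inserting the correct bound $2\Delta$ into your triangle inequality gives $2\Delta+2\Delta=4\Delta$, and both terms can simultaneously approach $2\Delta$ (take $E_k$ diagonal and $\gamma_e(t)$ near $v_{k-1}$ in a variant of the same picture), while the quantity to be bounded stays $\le 3\Delta$ because the two displacements partially cancel: the projection error points along $(1,-1)$ whereas the edge points along $(1,0)$, $(0,1)$, or $(1,1)$. A correct argument must use this correlation. For instance, write $\gamma_e^\perp(t)=\gamma_e(t)+\lambda(1,-1)$ and prove $|\lambda|\le\Delta$ by comparing $\gamma_e^\perp(t)$ componentwise with any point of $\gamma^*\cap\overline{C}$ (both lie on the monotone curve $\gamma^*$, hence are ordered componentwise); then, writing $\gamma_e(t)=v_k-(\Delta-\sigma)e$ with $\sigma\in[0,\Delta]$, one gets for a horizontal or vertical edge ($e\in\{(1,0),(0,1)\}$) the bound $\|\gamma_e^\perp(t)-v_k\|_1\le 2|\lambda|+(\Delta-\sigma)\le 3\Delta$, and for a diagonal edge ($e=(1,1)$) the identity $\|\gamma_e^\perp(t)-v_k\|_1=|\lambda-(\Delta-\sigma)|+|\lambda+(\Delta-\sigma)|=2\max\{|\lambda|,\Delta-\sigma\}\le 2\Delta$. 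The paper's own (terser) route is different again: it locates $\gamma_e^\perp(t)$ in a cell contiguous to one having $v_k$ as a vertex and uses that the union of two contiguous cells has $\|\cdot\|_1$-diameter $3\Delta$.
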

    
        Let us denote $\widetilde\Delta:=\max\{\|s'\|_\infty,\|\phi'\|_\infty\} \cdot 3\Delta$.
        Using \eqref{eq: len}, \eqref{eq: aux cdtw}, \eqref{eq: aux dtw}, and Claim \ref{claim: h_tilde_bound}, we get 
        \begin{align}
            DTW_\omega(\overline s, \overline \phi)-CDTW(s,\phi)&\leq \sum_{k=1}^K\int_{t_{k-1}}^{t_k} \left( h(\gamma_e^\perp(t))- h(\gamma_e(t_k))\right)\|\gamma_e'(t)\|_1 dt \notag \\ 
            &\leq \sum_{k=1}^K\int_{t_{k-1}}^{t_k} \left| h(\gamma_e^\perp(t))- h(\gamma_e(t_k))\right|\|\gamma_e'(t)\|_1 dt \notag \\
            &\leq \widetilde{\Delta} \sum_{k=1}^K \int_{t_{k-1}}^{t_k} \|\gamma_e'(t)\|_1 dt \notag \\
            &= \widetilde{\Delta} \int_{0}^{1} \|\gamma_e'(t)\|_1 dt \notag \\
            &= 2\widetilde{\Delta} .\label{eq: first inequality}
        \end{align}
        
       \textbf{Step 2:} We still need to bound the absolute value of the left-hand side of the previous inequality. To achieve this, let $\overline \gamma_d$ be an optimal discrete warping path between $\overline s$ and $\overline \phi$ 
        for  the problem given in \eqref{eq: DTW problem w}. 
        The range of $\overline \gamma_d$ is a set of vertices $\{N\cdot v_0,N\cdot v_1, \dots, N\cdot v_{J-1}, N\cdot  v_{J}\}$, for some $J \leq 2 N$, and  $v_j\in [0,1]\times [0,1]$ with $v_0=(0,0)$ and $v_J=(1,1)$. 
                
        Let $\gamma_d(t)=(x_d(t),y_d(t))$ denote the polygonal continuous path form $[0,1]$ to $[0,1]\times[0,1]$ that connects the vertices $\{v_j\}_{j=0}^J$ in order and is parametrized by constant-speed. 
        
        On the one hand, 
        \begin{equation*}
            CDTW(s,\phi)\leq \int_0^1{h}(\gamma_d(t))\|\gamma_d'(t)\|_1dt. 
        \end{equation*}
        
        On the other hand, similarly as before, for each $1\leq j\leq J$ let us denote $t_j\in [0,1]$ the time such that $\gamma_{d}$ visits the vertex $v_j$, that is,  $\gamma_d(t_j) = ({x}_d(t_j),{y}_d(t_j)) 
         =v_j$.  Thus, by using Claim \ref{claim: weights} applied to $\gamma_d$ we can write
        \begin{align*}
            DTW_\omega(\overline{s},\overline{\phi})&=\sum_{j=1}^J|s(x_d(t_j))-\phi(y_d(t_j))|\omega_j=\sum_{j=1}^J h(\gamma_d(t_j))\int_{t_{j-1}}^{t_j}\|\gamma_d'(t)\|_1dt.
        \end{align*}
        
        Therefore,
        \begin{align}
            CDTW(s,\phi)-DTW_\omega(\overline{s},\overline{\phi})
            &\leq \sum_{j=1}^J\int_{t_{j-1}}^{t_j}|{h}(\gamma_d(t))-{h}(\gamma_d(t_j))|\|\gamma_d'(t)\|_1dt\notag \\
            &\leq \widetilde\Delta\sum_{j=1}^J\int_{t_{j-1}}^{t_j}\|\gamma_d'(t)\|_1dt\notag \\
            &=\widetilde\Delta\int_0^1\|\gamma_d'(t)\|_1dt \notag \\
            &= 2\widetilde{\Delta} .  \label{eq: second inequality}
        \end{align}

        \textbf{Step 3:} In conclusion, from \eqref{eq: first inequality} and \eqref{eq: second inequality}, given $\varepsilon>0$ by choosing step-size
        $$\Delta<\delta:=\frac{\varepsilon}{6 \max\{\|s'\|_\infty,\|\phi'\|_\infty\}}$$ we obtain     
        \begin{equation}\label{eq: final}
                |CDTW(s,\phi)-DTW_\omega(\overline{s}, \overline{\phi})|<\varepsilon.
        \end{equation}        

        \textbf{Step 4:} Finally, if there is no optimal solution $\gamma^*$ for $CDTW(s,\phi)$, given $\varepsilon>0$ let us consider an admissible continuous warping path $\gamma_{{\varepsilon}}$ such that
        \begin{align} 
            &\int_0^1
            h(\gamma_{{\varepsilon}}(t))\|(\gamma_{{\varepsilon}})'(t)\|_1dt -CDTW(s,\phi) \notag \\
            &=\left|CDTW(s,\phi)-\int_0^1
            h(\gamma_{{\varepsilon}}(t))\|(\gamma_{{\varepsilon}})'(t)\|_1dt\right|
            <\frac{\varepsilon}{2}.\label{eq: approx minimizer 1}
        \end{align}
        By repeating the proof up to \eqref{eq: first inequality} with $\gamma_{\varepsilon}$ playing the role of $\gamma^*$, we obtain that there exists a $\delta_1$ independent of $\gamma_{{\varepsilon}}$ such that if $s,\phi$ are discretized with step size $\Delta <\delta_1$ then, analogously to \eqref{eq: first inequality},  
        \begin{equation} 
            DTW_\omega(\overline{s},\overline{\phi})  - \int_0^1
            h(\gamma_{{\varepsilon}}(t))\|(\gamma_{{\varepsilon}})'(t)\|_1dt< \frac{\varepsilon}{2}. \label{eq: approx minimizer 2}
        \end{equation}
        Using \eqref{eq: approx minimizer 1} and \eqref{eq: approx minimizer 2}, 
        \begin{equation*}
                DTW_\omega(\overline{s}, \overline{\phi}) - CDTW(s,\phi)<\varepsilon.
        \end{equation*}    

        Notice that for \eqref{eq: second inequality} the existence of a minimizer $\gamma^*$ was not needed. In conclusion, for every $\varepsilon >0$, choosing a step-size
        \begin{equation*}
            \Delta<\delta:=\min\left\{\frac{\varepsilon}{6 \max\{\|s'\|_\infty,\|\phi'\|_\infty\}}, \delta_1\right\}
        \end{equation*}
        we obtain     
        \begin{equation*}
            |CDTW(s,\phi)-DTW_\omega(\overline{s}, \overline{\phi})|<\varepsilon.
        \end{equation*} \ 
    \end{proof}

    \begin{proof}[Proof of Claim \ref{claim: weights}] 
            \begin{align*}
                \int_{t_{k-1}}^{t_k}\|\gamma_e'(t)\|_1dt&=\int_{t_{k-1}}^{t_k} | x_e' (t)|+|y_e' (t)|dt.
            \end{align*}
            Since $ x_e$ is non-decreasing on $[0,1]$ 
            \begin{align*}
                &\int_{t_{k-1}}^{t_k} | x_e' (t)|dt
                =\int_{t_{k-1}}^{t_k}  x_e' (t)dt  
                = x_e(t_k) - x_e (t_{k-1})
                = |x_e (t_k)- x_e (t_{k-1})|.
            \end{align*}
            Analogously, $\int_{t_{k-1}}^{t_k} | y_e' (t)|dt=|y_e(t_{k})-y_e(t_{k-1})|$.  
        \end{proof}

        \begin{proof}[Proof of Claim \ref{claim: norm1 pi}]
            Notice that $\gamma_e(t)=(x_e(t), y_e(t))$ and $\gamma_e^\perp(t)=( x_e^\perp(t),  y_e^\perp(t))$ lie on $[0,1]\times [0,1]$ which is in the first quadrant of $\mathbb{R}^2$. Then, by the geometric property of $\|\cdot\|_1$ for projected points along the direction $(1,-1)$ in the same quadrant, we have $\|\gamma_e(t)\|_1=\|\gamma_e^\perp(t)\|_1$ for all $t\in [0,1]$.
            
            Since $ x_e$, $ y_e$ are non-decreasing functions over $[0,1]$, we also have that $x_e'\geq 0$ and  $  y_e'\geq 0$. Thus, 
            \begin{align*}
                \|\gamma_e(t)\|_1&= x_e(t)+y_e(t)=\int_0^t\ x_e'(u)+ y_e'(u) \, du=\int_0^t\|\gamma_e'(u)\|_1du.                
            \end{align*}
                        Analogously, since $\gamma_e^\perp$ is an admissible path, we have $\|\gamma_e^\perp(t)\|_1=\int_0^t\|(\gamma_e^\perp)'(u)\|_1 du $.

            Therefore,            \begin{equation*}
                \int_0^t\|\gamma_e'(u)\|_1du=\int_0^t\|(\gamma_e^\perp)'(u)\|_1du.                         
            \end{equation*}

            Taking derivative on both sides we have $\|\gamma_e'(t)\|_1=\|(\gamma_e^\perp)'(t)\|_1$.
        \end{proof}

        \begin{proof}[Proof of Claim \ref{claim: h_tilde_bound}]
            By definition, $ h (x(t),  y(t)) = |s(x(t))-\phi(y(t))|$.
            Thus,
            \begin{align*}
                &\left|h(\gamma_e^\perp(t))- h(\gamma_e(t_k))\right|=\left| \left| s(x_e^\perp(t))-\phi(y_e^\perp(t)) \right| - \left|s(x_e(t_k))-\phi(y_e(t_k)) \right|\right| \\
                &\leq\left| s(x_e^\perp(t))- s(x_e(t_k)) -\phi(y_e^\perp(t)) +\phi(y_e(t_k)) \right|\\
                &\leq   \|s'\|_\infty |x_e^\perp(t)- x_e(t_k)| + \|\phi'\|_\infty |y_e^\perp(t) - y_e(t_k) | \\
                &\leq \max\{\|s'\|_\infty,\|\phi'\|_\infty\}  \left(|x_e^\perp(t)- x_e(t_k)| + |y_e^\perp(t) - y_e(t_k) |\right)\\
                &= \max\{\|s'\|_\infty,\|\phi'\|_\infty\} \|(\gamma_e^\perp)(t)-\gamma_e(t_k)\|_1\\
                &\leq \max\{\|s'\|_\infty,\|\phi'\|_\infty\} \cdot 3\Delta
            \end{align*}
            where we have used that $\|(\gamma_e^\perp)(t)-\gamma_e(t_k)\|_1\leq 3\Delta$ because  for all $t\in(t_{k-1},t_k)$, the point $\gamma_e^\perp(t)$ is at most 2 cells apart from the vertex $v_k = \gamma_e(t_k)$. And two points in contiguous cells will always be at $L^1$-distance at most $3\Delta$. 
        \end{proof}

\subsection{The $d_T$ Divergence}\label{app: d_T}

    \begin{proof}[Proof of Lemma \ref{lem: extended graph}  under hypothesis \textbf{H} (Definition \ref{def: hyp H})]
        If $s\circ g_1=\phi\circ g_2$, then $\|s'\|_1=\|\phi'\|_1 $.
        Indeed, $s\circ g_1=\phi\circ g_2$ implies $(s'\circ g_1)g_1'=(\phi'\circ g_2)g_2'$. Using that $g_1, g_2$ are non-decreasing we get $(|s'|\circ g_1) g_1' = (|\phi'|\circ g_2) g_2'$. Thus, 
        \begin{align*}
            \|s'\|_1&=\int_0^1|s'(u)|du=\int_0^1 |s'|(g_1(t))g_1'(t)dt\\
            &=\int_0^1 |\phi'|(g_2(t))g_2'(t)dt=  \int_0^1|\phi'(u)|du=\|\phi'\|_1.
        \end{align*}
        Therefore, $(\frac{|s'|}{\|s'\|_1}\circ g_1) g_1' = (\frac{|\phi'|}{\|\phi'\|_1}\circ g_2) g_2'$. As in Section \ref{subsec: sdb}, integrating on both sides yields
         
        \begin{align}\label{eq: fundamental identity}
                F_{\frac{|s'|}{\|s'\|_1}}(g_1(u)) = F_{\frac{|\phi'|}{\|\phi'\|_1}}(g_2(u)).
        \end{align}
    
        By doing a change of variables $t=g_1(u)$, we can write
        \begin{small}
            \begin{align}
                &d_T(s,\phi) = \int_0^1 |s(t)-\phi(g^s_\phi(t))|^2 \sqrt{(g^s_\phi)'(t)}\ dt \notag\\
                &=\int_0^1 \left|s(g_1(u))-\phi\left(F_{\frac{|\phi'|}{\|\phi'\|_1}}^\dagger((F_{\frac{|s'|}{\|s'\|_1}}(g_1(u))))\right)\right|^2 \sqrt{\left(F_{\frac{|\phi'|}{\|\phi'\|_1}}^\dagger F_{\frac{|s'|}{\|s'\|_1}}\right)'(g_1(u))} \, g_1'(u)du \notag\\
                &=\int_0^1 \underbrace{\left|s(g_1(u))-\phi\left(F_{\frac{|\phi'|}{\|\phi'\|_1}}^\dagger(F_{\frac{|s'|}{\|s'\|_1}}(g_1(u)))\right)\right|^2 \sqrt{\left(F_{\frac{|\phi'|}{\|\phi'\|_1}}^\dagger \circ F_{\frac{|s'|}{\|s'\|_1}}\circ g_1\right)'(u) \, g_1'(u)}}_{\text{A(u)}}  du\label{eq: d_t mult int}
            \end{align}  
        \end{small}
        where in the last equality holds by the chain rule. 
        
        Notice that $F_{\frac{|\phi'|}{\|\phi'\|_1}}$ and $g_2$ are non-decreasing and have finitely many flat regions. Thus, the same is true for its composition $F_{\frac{|\phi'|}{\|\phi'\|_1}} \circ g_2$. Let us denote $I_i$, $i=1,\dots, N$ each flat region of the function $F_{\frac{|\phi'|}{\|\phi'\|_1}} \circ g_2$. 
        
        On the one hand, if $u \in [\bigcup I_i]^c$, then $F_{\frac{|\phi'|}{\|\phi'\|_1}}$ must be strictly right-increasing at $g_2(u)$ (in the sense of  Proposition \ref{prop: properties_gen_inv}). By Proposition \ref{prop: properties_gen_inv} and \eqref{eq: fundamental identity}, we have
        \begin{equation*}
            F_{\frac{|\phi'|}{\|\phi'\|_1}}^{\dagger}(F_{\frac{|s'|}{\|s'\|_1}}(g_1(u))) = g_2(u).
        \end{equation*}
        Therefore,
         \begin{equation}\label{eq: aux dt 1}
             s(g_1(u))=\phi(g_2(u))=\phi\left(F_{\frac{|\phi'|}{\|\phi'\|_1}}^{\dagger}(F_{\frac{|s'|}{\|s'\|_1}}(g_1(u))) \right) \quad \forall u\in \left[\bigcup I_i\right]^c.
        \end{equation}
        
        On the other hand, without loss of generality, assume $I_i$ to be an interval with non-empty interior. Then, 
        $F_{\frac{|\phi'|}{\|\phi'\|_1}}\circ g_2$ is constant
        on $I_i$. Hence, using \eqref{eq: fundamental identity}, we have that there is a constant $c_i$ such that 
        \begin{equation*}
            F_{\frac{|s'|}{\|s'\|_1}}(g_1(u)) = F_{\frac{|\phi'|}{\|\phi'\|_1}}(g_2(u)) = c_i \qquad \forall u \in I_i.
        \end{equation*}
        In particular, $F_{\frac{|s'|}{\|s'\|_1}}\circ g_1$ is constant         on $I_i$. 
        By taking the composition with $F_{\frac{|\phi'|}{\|\phi'\|_1}}^{\dagger}$, this implies that        
        $F_{\frac{|\phi'|}{\|\phi'\|_1}}^{\dagger}\circ F_{\frac{|s'|}{\|s'\|_1}}\circ g_1$ is also constant on $I_i$. Thus, 
        \begin{equation}\label{eq: 0 deriv}
         (F_{\frac{|\phi'|}{\|\phi'\|_1}}^{\dagger}\circ F_{\frac{|s'|}{\|s'\|_1}}\circ g_1)' \equiv 0  \qquad \text{on }   I_i.
        \end{equation}
        
        Finally, using \eqref{eq: aux dt 1} and \eqref{eq: 0 deriv} in \eqref{eq: d_t mult int}, we obtain          
        \begin{align*}
        d_T(s,\phi)=    \int_0^1 A(u) \ du \leq \int_{[\bigcup I_i]^c} A(u) \ du + \sum _i \int_{I_i} A(u) \ du=0  
        \end{align*}
        since the integrand $A(u)$ is zero in each region of integration. Thus, $d_T(s,\phi) = 0$ as we wanted to prove.
    \end{proof}

    In order to prove Remark \ref{rmk: graph of transport} we need the following lemma. 
    \begin{lemma}\label{lm: reparametrization}
        Let $F,G,\tilde{h}_1,\tilde{h}_2 \in \mathcal{G}$ such that $G\circ \tilde{h}_1 = F \circ \tilde{h}_2$. Then, there exists $h_1,h_2 \in \mathcal{G}$ satisfying:
        \begin{enumerate}
            \item $G\circ h_1 = F \circ h_2$.
            \item For every $(t, F^\dagger \circ G(t))\in Graph(F^\dagger \circ G)$ there exists an $u$ such that $t = h_1(u)$, $F^\dagger \circ G(t) = h_2(u)$. In other words, the graph of the function $F^\dagger \circ G$ is included in the graph of the curve $t\mapsto (h_1(t),h_2(t))$. 
        \end{enumerate} 
        
    \end{lemma}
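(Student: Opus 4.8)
The plan is to reduce this lemma to the explicit construction of matching reparametrizations already carried out in the transitivity step of the proof of Proposition~\ref{prop: equivalence relation}. The two situations are formally identical under the dictionary $g_2 \leftrightarrow G$ and $f_1 \leftrightarrow F$: in both cases one starts from two functions in $\mathcal{G}$ and seeks $h_1,h_2\in\mathcal{G}$ whose compositions with these two functions agree. Crucially, the construction there never uses any relation between $g_2$ and $f_1$ beyond membership in $\mathcal{G}$ (both map $[0,1]$ onto $[0,1]$, are non-decreasing, and have finitely many flat regions), so it transfers verbatim to the pair $(G,F)$.

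First I would set $g_2:=G$ and $f_1:=F$ and invoke that construction to produce $h_1,h_2$. Concretely, letting $r_i$ denote the left endpoints of the flat regions of $F$, $\Delta_i$ their lengths, and $s_i$ the first preimage under $G$ of the height $F(r_i)$, the functions $h_1,h_2$ are defined piecewise exactly as in Proposition~\ref{prop: equivalence relation}, with $F^\dagger$ in place of $f_1^\dagger$, and after the final rescaling from $[0,1+N(1)]$ back to $[0,1]$ they lie in $\mathcal{G}$. The same verification as there — splitting into the cases $t\in[t_i,t_i+\Delta_i)$ and its complement, and using the identity $F\circ F^\dagger=Id$ from \eqref{eq: g g dagger = id} — yields $G\circ h_1 = F\circ h_2$, which is item~(1).

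For item~(2) I would appeal directly to Remark~\ref{rmk: parametrization}, which records precisely the image and graph behaviour of this construction. Under the dictionary above it asserts that $Im(h_1)=Dom(F^\dagger\circ G)$ and that for every $u\in Dom(F^\dagger\circ G)$ there is a $t$ with $h_1(t)=u$ and $h_2(t)=F^\dagger(G(h_1(t)))=(F^\dagger\circ G)(u)$; here one chooses $t$ in the complement of the flat intervals, where the closed form $h_2=F^\dagger\circ G\circ h_1$ is valid. Hence every point $\bigl(u,(F^\dagger\circ G)(u)\bigr)$ of $Graph(F^\dagger\circ G)$ equals $(h_1(t),h_2(t))$ for some $t$, which is exactly the claimed containment.

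The hypothesis $G\circ\tilde{h}_1=F\circ\tilde{h}_2$ plays no essential role in the argument: it merely situates the lemma by certifying that $F$ and $G$ can be aligned, a property automatic for elements of $\mathcal{G}$, and the explicit $h_1,h_2$ we build supersede the given $\tilde{h}_1,\tilde{h}_2$. The only point requiring care — and the mild obstacle — is the bookkeeping on the flat regions: one must check that the definition of $h_2$ via $F^\dagger$ traverses each plateau of $F$ correctly, so that off the flat regions $F\circ F^\dagger$ collapses to the identity exactly where needed and $h_2$ is continuous at the breakpoints $t_i$ and $t_i+\Delta_i$. This is handled by the continuity computation already performed in Proposition~\ref{prop: equivalence relation}, which transfers without change once $f_1^\dagger$ is replaced by $F^\dagger$.
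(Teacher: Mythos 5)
Your proposal is correct and takes essentially the same route as the paper: the paper's own proof likewise reduces the lemma to the transitivity construction in Proposition~\ref{prop: equivalence relation} (with $g_2$ replaced by $G$ and $f_1$ by $F$) and cites Remark~\ref{rmk: parametrization} for the graph containment. Your side observation that the hypothesis $G\circ\tilde{h}_1 = F\circ\tilde{h}_2$ is never actually consumed is accurate, since the construction only uses that $F,G\in\mathcal{G}$ (continuity and the endpoint conditions guarantee the required preimages $s_i$ exist).
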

    
    \begin{proof}
        The proof follows from noticing that $F \sim G$ according to the relation given in Proposition \ref{prop: equivalence relation} and then using the construction of $h_1, h_2$ as in the proof of the transitivity property in such proposition (with $g_2$ replaced by $G$ and $f_1$ by $F$ in \eqref{eq: aux comp}). See Remark \ref{rmk: parametrization}. 
    \end{proof}

    \begin{proof}[Proof of Remark \ref{rmk: graph of transport} under hypothesis \textbf{H} (Definition \ref{def: hyp H})]
        If $s\circ g_1=\phi\circ g_2$, then $\|s'\|_1=\|\phi'\|_1 $. By the same arguments as in Section \ref{subsec: sdb}, this implies that 
        \begin{align*}
            F_{\frac{|s'|}{\|s'\|_1}}(g_1(t)) = F_{\frac{|\phi'|}{\|\phi'\|_1}}(g_2(t)).
        \end{align*}
                Due to Lemma \ref{lm: reparametrization} above, there exist $h_1,h_2$ such that given $(t, F_{\frac{|\phi'|}{\|\phi'\|_1}}^\dagger(F_{\frac{|s'|}{\|s'\|_1}}(t)))$, there exists $u$ satisfying 
        \begin{equation}\label{eq: aux ff}
            (t, F_{\frac{|\phi'|}{\|\phi'\|_1}}^\dagger(F_{\frac{|s'|}{\|s'\|_1}}(t)))=(h_1(u),h_2(u)) . 
        \end{equation}
        This shows that the graph of $g_\phi^s = F_{\frac{|\phi'|}{\|\phi'\|_1}}^\dagger\circ F_{\frac{|s'|}{\|s'\|_1}}$ is included in the graph of the curve $(h_1,h_2)$. 

    \end{proof}

\subsection{A method for solving the classification problem \ref{def: classification problem}}\label{app: classif}

    \begin{proof}[Proof of Theorem \ref{thm: cdtw implies sdb} under hypothesis \textbf{H} (Definition \ref{def: hyp H})]\,
    
    \noindent    Assume $CDTW(s,\phi)=0$. By Theorem \ref{thm: cdtw class}, there exists $g_1,g_2\in \mathcal{G}$ such that
                    $s\circ g_1=\phi\circ g_2.$   
        Thus, the conclusion $d_T(s,\phi)=0$ follows from Lemma \ref{lem: extended graph}.
    \end{proof}

    \begin{proof}[Proof of Theorem \ref{thm: CDTW guarantees} ]
       If $s\in\mathbb{S}^{(c)} = \bigcup_{m=1}^{M_c} \mathbb{S}_{\phi_m^{(c)}}$ for some class $c = c_0$, there exists $m_0\in \{1,\dots, M_{c_0}\}$ such that $s\in \mathbb{S}_{\phi_{m_0}^{(c_0)}}$ and       
       $s\sim \phi_{m_0}^{(c_0)}$. 
       By Proposition \ref{prop: equivalence relation}, it also hols that $s\sim s_{m_0}^{(c_0)}$, where $s_{m_0}^{(c_0)}$ is an arbitrary sample of the atomic class $\mathbb{S}_{\phi_{m_0}^{(c_0)}}$. That is, there exists $g_1,g_2\in \mathcal{G}$ such that $s\circ g_1=s_{m_0}^{(c_0)}\circ g_2$. Then, by Theorem \ref{thm: cdtw class}, $CDTW(s,s_{m_0}^{(c_0)})=0$.

       Since each atomic class is an equivalence class, then $s$ does not belong to any other atomic class. Thus, $CDTW(s,s_m^{(c)}) > 0$ for all $m,c\neq m_0,c_0$. 
       
       Then,
        \begin{equation} 
            \min_{m,c} CDTW(s,s_m^{(c)})=0,
        \end{equation}
       and the minimum is attained at $m_0,c_0$ with $c_0$ the class of $s$.
       Therefore, CDTW perfectly recovers the class of signal $s$.   
       
       Finally, the claim about $d_T$ follows from noticing that $d_T(s,\phi)\geq 0$ for every $\phi$ and $d_T(s,s_{m_0}^{(c_0)}) = 0$ since $CDTW(s,s_{m_0}^{(c_0)}) = 0$ due to Theorem \ref{thm: cdtw implies sdb}.

    \end{proof}

\section{Appendix: Generalized Inverse Function}\label{app: gen inv}

    Let $F:[a,b] \to [c,d]$  be a non-decreasing, right continuous function. Its generalized inverse $F^\dagger :[c,d] \to [a,b]$, see Definition \ref{def: gen inv}, can be visualized just as a regular inverse in the sense that its graph reflects the graph of $F$ through the line $y=x$. Nevertheless, constant regions are reflected as discontinuity jumps and jumps as constant regions. The value of the function at each jump is always the value at the high value since it must be non-decreasing and right continuous. See Figure \ref{fig: gen inv}.

\begin{figure}[ht!]
    \centering
    \includegraphics[width=0.35\linewidth]{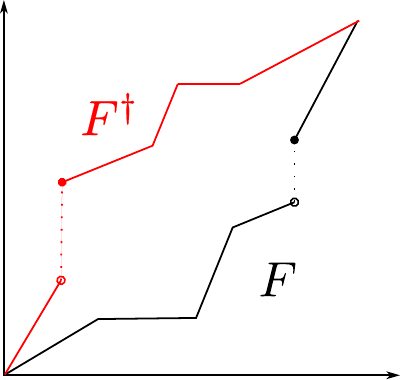}
    \caption{Graph of a right-continuous non-decreasing function $F$ (black) and its generalized inverse $F^\dagger$(red).}
    \label{fig: gen inv}
\end{figure}

\begin{proposition} \label{prop: properties_gen_inv}     
    (Properties of generalized inverses)
    Let $F:[a,b]\to[c,d]$ be a non-decreasing, right continuous function and satisfying $F(b) = d$. Then,
    \begin{enumerate}
        \item $(F^\dagger)^\dagger = F$.
        \item $F^\dagger\circ F (x) \geq x$.
        \item If $F(x)$ is strictly right-increasing at $x$, in the sense that $$\forall \varepsilon >0, \,  F(x) < F(x+\varepsilon), $$ then $F^\dagger\circ F (x) = x$. 
    \end{enumerate}    
\end{proposition}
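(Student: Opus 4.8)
The plan is to work directly from the defining formula $F^\dagger(y) = \inf\{x \in [a,b] : F(x) > y\}$ (Definition \ref{def: gen inv}), and to establish the three claims in the order (2), (3), (1), since (2) feeds directly into (3) and the book-keeping developed for both is reused in (1). Two structural facts I would record at the outset are that, because $F$ is non-decreasing, the set $A_y := \{x : F(x) > y\}$ is an up-set of the form $(F^\dagger(y),b]$ or $[F^\dagger(y),b]$, and that $F^\dagger$ is itself non-decreasing, since a larger threshold $y$ shrinks $A_y$ and hence raises its infimum.

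For part (2), set $y = F(x)$. Any $x'$ with $F(x') > F(x)$ must satisfy $x' > x$, for otherwise monotonicity would force $F(x') \le F(x)$. Hence $A_{F(x)} \subseteq (x,b]$, and taking the infimum gives $F^\dagger(F(x)) \ge x$. For part (3), the strict right-increasing hypothesis says $x+\varepsilon \in A_{F(x)}$ for every $\varepsilon > 0$, so $F^\dagger(F(x)) \le x + \varepsilon$ for all $\varepsilon$, whence $F^\dagger(F(x)) \le x$; combined with part (2) this yields the claimed equality.

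Part (1) is the main obstacle, and I would spend most of the effort there. First I would verify that $F^\dagger$ satisfies the same standing hypotheses as $F$, so that $(F^\dagger)^\dagger$ is even defined: monotonicity is already noted; right-continuity I would obtain by writing $A_y = \bigcup_{\eta > 0} A_{y+\eta}$ and observing that the infimum of this increasing union equals $\lim_{\eta \downarrow 0} F^\dagger(y+\eta)$; and the endpoint condition $F^\dagger(d) = b$ holds automatically, since $F \le d$ forces $A_d = \emptyset$ and $\inf\emptyset = b$. With this in place I would compute $(F^\dagger)^\dagger(x) = \inf\{y : F^\dagger(y) > x\}$ by a two-sided bound. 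The inclusion $\{y : F^\dagger(y) > x\} \subseteq [F(x), d]$ follows because $F(x) > y$ already forces $x \in A_y$ and hence $F^\dagger(y) \le x$; this gives $(F^\dagger)^\dagger(x) \ge F(x)$.

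For the reverse inequality I would fix $y > F(x)$ and exploit right-continuity of $F$ at $x$: it produces a $\delta > 0$ with $F(x') < y$ on $[a, x+\delta)$, so that $A_y \subseteq [x+\delta, b]$ and $F^\dagger(y) \ge x+\delta > x$. Thus $(F(x), d] \subseteq \{y : F^\dagger(y) > x\}$, giving $(F^\dagger)^\dagger(x) \le F(x)$ and hence equality. The delicate points are the boundary cases $x = b$ and $F(x) = d$, where the two competing $\inf\emptyset$ conventions collide; this is precisely where the hypothesis $F(b) = d$ enters, since it rules out a spurious plateau of $F^\dagger$ near $d$ and guarantees $(F^\dagger)^\dagger(b) = d = F(b)$ rather than an incorrect value. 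I expect verifying right-continuity of $F^\dagger$ and reconciling these boundary conventions to be the only genuinely subtle steps; everything else reduces to the monotonicity of $F$ and careful reading of the infimum.
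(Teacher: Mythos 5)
Your proposal is correct, and it splits cleanly against the paper. For parts (2) and (3) your route is essentially the paper's: both arguments rest on the monotonicity-forced inclusion $\{z : F(z) > F(x)\} \subseteq \{z : z > x\}$, which gives $F^\dagger(F(x)) \geq x$, and on the observation that strict right-increase reverses the estimate. Your version of (3) is in fact slightly tighter than the paper's: you derive the upper bound directly from $x+\varepsilon \in \{z : F(z) > F(x)\}$ for every $\varepsilon>0$, hence $F^\dagger(F(x)) \leq x+\varepsilon$, whereas the paper disposes of (3) with the remark that ``the equality holds only if $F$ is strictly increasing at $x$,'' which as literally written is the converse of the implication being claimed. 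The genuine divergence is part (1): the paper does not prove it, deferring entirely to the appendix of \cite{aldroubi2021signed}, while you supply a complete self-contained argument --- first checking that $F^\dagger$ inherits the standing hypotheses (monotone; right-continuous via $A_y = \bigcup_{\eta>0} A_{y+\eta}$ and the limit of the infima; $F^\dagger(d)=b$ from $A_d=\emptyset$), then sandwiching $(F^\dagger)^\dagger(x)$: the inclusion $\{y : F^\dagger(y)>x\} \subseteq [F(x),d]$ gives the lower bound $F(x)$, and right-continuity of $F$ at $x$ gives $(F(x),d] \subseteq \{y : F^\dagger(y)>x\}$ and hence the matching upper bound, with the cases $F(x)=d$ and $x=b$ settled by the $\inf\emptyset$ conventions. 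Your treatment buys self-containedness and correctly isolates the one place the hypothesis $F(b)=d$ is indispensable (without it $(F^\dagger)^\dagger(b)=d\neq F(b)$, so (1) would simply be false); the paper's citation buys brevity at the cost of leaving that subtlety invisible.
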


\begin{proof} For the proof of part 1 we refer to the Appendix of the article \cite{aldroubi2021signed}. For the proof of parts 2 and 3, by definition we have 
        $$F^\dagger\circ F(x) = F^\dagger(F(x))  = \inf\{z\in [a,b]: F(z)>F(x)\}.$$ 
    Since $F$ is non-decreasing we have that $$\{ z \in [a,b]: F(z)>F(x) \} \subseteq \{z \in [a,b]: z>x \}.$$
    With the equality holding only if $F$ is strictly increasing at $x$.
    Thus, 
        $$F^\dagger (F(x)) = \inf\{z\in [a,b]: F(z)>F(x)\} \geq \inf \{z \in [a,b]: z>x\} = x.$$ 
    With the inequality being an equality only if $F$ is strictly increasing at $x$.
\end{proof}

\bibliographystyle{IEEEtran}
\bibliography{references}

\end{document}